\newcommand{\ecalculus}{$\forall$\textsf{Exp+Res}\xspace}
\newcommand{\irc}{\textsf{IR-calc}\xspace}
\newcommand{\irmc}{\textsf{IRM-calc}\xspace}
\newcommand{\qrc}{\textsf{Q-Res}\xspace}
\newcommand{\qurc}{\textsf{QU-Res}\xspace}
\newcommand{\qrat}{\textsf{QRAT}\xspace}
\newcommand{\lqrc}{\textsf{LD-Q-Res}\xspace}
\newcommand{\lquprc}{\textsf{LQU$^+$-Res}\xspace}
\newcommand{\lqurc}{\textsf{LQU-Res}\xspace}
\newcommand{\mergeres}{\textsf{MRes}\xspace}
\newcommand{\mergeresR}{\textsf{MRes-}$\mathcal{R}$\xspace}
\newcommand{\mergeresT}{\textsf{MRes-}$\mathcal{T}$\xspace}
\newcommand{\mergeresM}{\textsf{MRes-}$\mathcal{M}$\xspace}
\newcommand{\crn}{\textsf{CR}$_n$\xspace}
\newcommand{\qcp}{\textsf{CP+}$\forall$\textsf{red}\xspace}
\newcommand{\sqcp}{\textsf{SemCP+}$\forall$\textsf{red}\xspace}
\newcommand{\qef}{\textsf{eFrege+}$\forall$\textsf{red}\xspace}
\newcommand{\qeff}{\textsf{eFrege}\xspace}
\newcommand{\qff}{\textsf{Frege}}
\newtheorem{induction}{Induction}
\newcommand{\comment}[1]{}
\tikzstyle{uedge}=[draw=blue!50!red]
\tikzstyle{fedge}=[draw=blue]
\tikzstyle{iedge}=[draw=red]
\tikzstyle{redge}=[draw=green!50!black]
\tikzstyle{rnode}=[draw,inner sep=2pt,color=black]
\tikzstyle{tnode}=[circle,minimum width=3pt,fill,inner sep=0pt]
\tikzstyle{dotnode}=[circle,minimum width=2pt,fill,inner sep=0pt]
\tikzstyle{labn}=[font=\sffamily,circle,fill=white,inner sep=1pt,draw=black]
\tikzstyle{legn}=[font=\scriptsize]
\tikzstyle{reln}=[circle,fill=white,inner sep=.4pt,draw=black]
\tikzstyle{oreln}=[circle,fill=white,inner sep=.4pt,draw=black!50,solid]
\tikzstyle{oree}=[thick,draw=black!50,densely dashed]
\tikzstyle{ree}=[thick,draw=black]
\tikzstyle{calcn}=[rectangle%
\tikzset{cross/.style={cross out, draw=black, fill=none, minimum size=.4cm, inner sep=0pt, outer sep=0pt}, cross/.default={1pt}}
\begin{document}
\title{Extending Merge Resolution to a Family of Proof Systems%\thanks{Supported by organization x.}
}
%
%\titlerunning{Abbreviated paper title}
% If the paper title is too long for the running head, you can set
% an abbreviated paper title here
%
\author{Sravanthi Chede%\orcidID{0000-0001-7170-6156} 
\and
Anil Shukla%\orcidID{1111-2222-3333-4444} 
%\and
%Gaurav Sood\inst{2}%\orcidID{2222--3333-4444-5555}
}
\authorrunning{S.~Chede and A.~Shukla}
% First names are abbreviated in the running head.
% If there are more than two authors, 'et al.' is used.
%
\institute{Department of Computer Science and Engineering, IIT Ropar, India\\
\email{\{sravanthi.20csz0001,anilshukla\}@iitrpr.ac.in}
%\and
%The Institute of Mathematical Sciences, HBNI, Chennai, India\\
%\email{gauravs@imsc.res.in}
%\url{http://www.springer.com/gp/computer-science/lncs} 
%\and
%ABC Institute, Rupert-Karls-University Heidelberg, Heidelberg, Germany\\
%\email{\{abc,lncs\}@uni-heidelberg.de}
}
\maketitle              % typeset the header of the contribution
\begin{abstract}
Merge Resolution (\mergeres~\cite{mres_paper}) is 
a recently introduced proof system for false QBFs. Unlike other known QBF proof systems, it builds winning strategies for the universal player within the proofs. Every line of this proof system consists of existential clauses along with countermodels. \mergeres stores the countermodels as merge maps.
Merge maps are deterministic branching programs in which isomorphism checking is efficient as a result \mergeres is a polynomial time verifiable proof system.\vspace{0.1cm}
%\mergeres uses merge maps in order to make the proof system polynomial time verifiable, as isomorphism checking can be done efficiently in this format.  

In this paper, we introduce a family of proof systems \mergeresR in which, the information of countermodels are stored in any pre-fixed complete representation $\mathcal{R}$, instead of merge maps. Hence corresponding to each possible complete representation $\mathcal{R}$, we have a sound and refutationally complete QBF-proof system in \mergeresR. To handle arbitrary representations for the strategies, we introduce consistency checking rules in \mergeresR instead of isomorphism checking in \mergeres.
As a result these proof systems are not polynomial time verifiable.
Consequently, the paper shows that using merge maps is too restrictive and can be replaced with arbitrary representations leading to several interesting proof systems.\vspace{0.1cm}
%We first show that all the proof systems in \mergeresR are sound and complete for false QBFs. This shows that the soundness of \mergeres does not depends on merge maps. 
%However, proof systems in \mergeresR are not polynomial time verifiable as checking whether two functions are consistent may not be efficient 

The paper also studies proof theoretic properties of the family of new proof systems \mergeresR. We show that \qef simulates all valid refutations from proof systems in \mergeresR. Since proof systems in \mergeresR may use arbitrary representations, in order to simulate them, we first represent the steps used by the proof systems as a new simple complete structure. As a consequence, the corresponding proof system belonging to \mergeresR is able to simulate all proof systems in \mergeresR.
%Leading to the corresponding proof system in \mergeresR being able to 
Finally, we simulate this proof system via \qef using the ideas from~\cite{Chew2021}.\vspace{0.1cm}
%Then simulate the corresponding proof system (also belonging to \mergeresR) via \qef using the ideas from~\cite{Chew2021}.

On the lower bound side, we lift the lower bound result of regular \mergeres~(\cite{BeyersdorffBMPS20}) for all regular proof systems in \mergeresR. 
%Further, we show that the lower bound results of regular \mergeres from~\cite{BeyersdorffBMPS20} lifts to all the proof systems in regular \mergeresR. 
To be precise, we show that the completion principle formulas from~\cite{JanotaM15} which are shown to be hard for regular \mergeres in~\cite{BeyersdorffBMPS20}, are also hard for any regular proof system in \mergeresR. 
%We further show that the KBKF-lq$[n]$ formulas from~\cite{BalabanovWJ14} which is hard for \mergeres is also hard for any general proof system in \mergeresR. 
Thereby, the paper lifts the lower bound of regular \mergeres to an entire class of proof systems, which use some complete representation, including those undiscovered, instead of merge maps. %This implies that the consistency check & arbitrary representations are not adding significant power to the restrictive regular \mergeresR. 
%but in general case the proof systems are intresting
%\keywords{Proof Complexity  \and QBF \and Merge resolution \and Extended Frege \and Simulation \and Lower bounds.}

\end{abstract}

\section{Introduction}
Proof complexity is a sub-branch of computational complexity in which the main focus is to understand the complexity of proving (refuting) theorems (contradictions) in various proof systems. Informally, a proof system is a polynomial time computable function which maps proofs to theorems. Several propositional proof systems like resolution~\cite{Rob63}, Cutting planes \cite{CookCT87}, and Frege \cite{frege1879} have been developed for proving (refuting) propositional formulas. The relative strength of these proof systems has been well studied~\cite{Segerlind07}. In the literature, several proof systems which are not polynomial time computable (verifiable) have also been well studied. For example, semantic cutting planes~\cite{FilmusHL16}. % proof system for SAT.
%Even more important is to show lower bound results for various proof systems. 

Quantified Boolean formulas (QBFs) extend propositional logic by adding quantifications $\exists$ (there exists) and $\forall$ (for all) to the variables. Several QBF proof systems like \qrc~\cite{KBKF95}, \qurc~\cite{Gelder12}, \lqrc~\cite{Balabanov12}, \ecalculus~\cite{JM15}, \irc, and \irmc~\cite{BCJ14} have been proposed in the literature. These are all either CDCL (Conflict Driven Clause Learning)-based, or expansion-based QBF proof systems. Cutting planes proof systems has also been extended for QBFs (\qcp)~\cite{BeyersdorffCMS18}.

A new proof system Merge resolution (\mergeres)~\cite{mres_paper} has been developed recently. It follows a different QBF-solving approach. In \mergeres, winning strategies for the universal player are explicitly represented within the proof in the form of deterministic branching programs, known as merge maps~\cite{mres_paper}. \mergeres builds partial strategies at each line of the proof such that the strategy at the last line (corresponding to the empty clause) forms the complete countermodel for the input QBF. As a result, \mergeres admits strategy extraction by design. 
Before applying the refutation rules, \mergeres needs to check the partial strategies of the hypothesis to be isomorphic. Note that the isomorphism checking in `merge maps' is efficient, hence \mergeres is a polynomial time verifiable proof system.

In this paper, we extend \mergeres to a family of sound and refutationally complete QBF proof systems \mergeresR. We observe that the representation of strategies in the proofs as merge maps is not relevant for the soundness and completeness of the proof system. Strategies can be depicted by any complete representation and by slightly modifying the refutation rules to include arbitrary complete representations, the soundness and completeness of the proof system remains intact. 
To be precise, we change the isomorphism checking rule in \mergeres to `consistency' checking (Section~\ref{sec:mresrNotations}) defined for Dependency Quantified Boolean Formulas (DQBFs) in \cite{davis_putnam}. This leads to the definition of a new proof system (Say $\mathcal{P}$) for each complete representation. All these new proof systems together form the family of proof systems denoted by \mergeresR. However, due to the consistency checking rules, the proof systems in \mergeresR are not polynomial time verifiable. 
In literature, many interesting non-polynomial time verifiable proof systems have been studied, for example, semantic cutting planes for QBFs (\sqcp)~\cite{BeyersdorffCMS18}.
Because of the introduction of such powerful consistency checking rules, proof systems in \mergeresR allow a few forbidden resolution steps of \mergeres (ref.~Example~\ref{eg:motivation}) %which may further lead to an exponential separation between them.

%Still such proof systems are interesting and have been studied in the literature. For example, semantic cutting planes for QBFs (\sqcp)~\cite{BeyersdorffCMS18}.

The paper also studies in detail the strength of these new proof systems. We show that \qef is powerful enough to simulate valid refutations of the proof systems in \mergeresR.
%We show that \qef is powerful enough to simulate each of these proof systems. {\bf As these new proof systems are not efficiently verifiable, the proposed \qef simulation algorithm assumes that the input refutation is valid.} 
Since these systems admit strategy extraction by design, we show the said simulations by using the ideas from~\cite{Chew2021}. 
Furthermore, the paper lifts the lower bound results from~\cite{BeyersdorffBMPS20} of regular \mergeres to every regular proof system $\mathcal{P} \in$ \mergeresR. We explain our contributions in detail in the following section.

\subsection{Our Contributions}
\begin{enumerate}
    \item \textbf{Introducing a new family of proof systems \mergeresR:} %Recently, a new refutational proof system for QBFs \mergeres which build strategies into proofs have been introduced in~\cite{mres_paper}. \mergeres uses merge maps to store the strategies within the proof. Merge maps are deterministic branching programs well studied in the field of computational complexity. It is known that isomorphic checking is easy for merge maps~\cite{mres_paper}.
    \mergeres~\cite{mres_paper} uses merge maps to store the countermodels within proofs.
    We observe that merge maps are not important for the soundness and completeness of the proof system. They just make the proof system polynomial time verifiable. However, at the same time they are too restrictive. In this paper, we propose a family of proof systems \mergeresR, one for each arbitrary complete representation of strategies into proofs (instead of merge maps). In order to make these proof systems sound and complete, we only need to modify the rules of \mergeres slightly. To be precise, we check the consistency relation instead of isomorphism among the strategies before applying the resolution rules (ref.~Section \ref{section_mresr}). 
    
    Although, this modification makes the proof systems not polynomial time verifiable; however, makes them very interesting, since the representations of strategies can be arbitrary. We only need that the representations be complete, in the sense that every finite function has at least one representation in the same. We need this for proving the completeness of our proposed proof systems (in Claim~\ref{claim_4}).
    
    To be precise, for proving completeness of \mergeresR, we consider the \mergeresM proof system in \mergeresR which uses merge maps as the representation for strategies. We then prove that \mergeresM system p-simulates the \mergeres proof system (which is known to be complete) by showing that every rule of \mergeres is also valid in \mergeresM
    (Theorem~\ref{theorem_3}). We then show
    %extend the proof by showing 
    how any \mergeresM-proof can be (non-efficiently) converted to $\mathcal{P}$-proof for any $\mathcal{P} \in$ \mergeresR (Claim~\ref{claim_4}). 
    
    The soundness proof of \mergeresR follows from proving that every line of the proof gives a partial falsifying strategy for the universal player. 
    
    \item \textbf{Proving a lower bound for Regular \mergeresR:} The Completion Formulas \crn were first introduced in~\cite{JanotaM15}, to show that level-ordered \qrc cannot p-simulate \ecalculus. They were also used to show that level-ordered \qrc cannot p-simulate tree-like \qrc~\cite{MahajanS16}. It has been shown recently in~\cite{BeyersdorffBMPS20}, that \crn formulas are even hard for regular \mergeres. In this paper, we lift this lower bound of the Completion Formulas \crn to all regular proof systems in \mergeresR. That is, we show that any regular proof system $\mathcal{P} \in $ \mergeresR, takes exponential time to refute the \crn formulas. 
    
   % The \crn formulas only have a single universal variable $z$.%so it makes them easy to analyse. Precisely the prefix of \crn is: $\underset{i,j \in [n]}{\exists} x_{ij}$, $\forall z$, $\underset{i \in [n]}{\exists} a_{i}$, $\underset{j \in [n]}{\exists} b_{j}$; where the set of all $a_i$s is $A$, all $b_j$s is $B$ and all $x_{ij}$s is $X$.
    For this lower bound proof, we mostly follow the proof from \cite[Theorem 9]{BeyersdorffBMPS20} where they have used the fact that most of the clauses in the \mergeres proof are going to be free of all the literals from right (in quantifier prefix) of the only universal variable $z$ of \crn. They showed that the number of such clauses are exponential in $n$ proving the required lower bound.

    We established the similar argument for every regular proof system in \mergeresR.
    In~\cite[Theorem 9]{BeyersdorffBMPS20}, the major part of the proof relied on the fact that \mergeres uses isomorphism so they could rule out the variables not in one hypothesis merge map of a resolution step, as also not to be present in the other. However, this is not the case in \mergeresR. So we provide a new Claim (ref.~Claim \ref{claim_7}) that even though \mergeresR insists on consistency rather than isomorphism, the above property holds.
    That is, the clauses in \crn make it such that the variables not in one hypothesis strategy cannot be present in the other as well when consistency is maintained in the resolution steps. 

    \item \textbf{\qef simulates \mergeresR:} We show that \qef simulates valid refutations in every proof system belonging to \mergeresR. 
    Since proof systems in \mergeresR can use arbitrary representations, simulating the same is difficult even for powerful proof systems. However \mergeresR family uses simple rules for refuting, which can even be detected by just observing the clauses of the lines, without exploring the representation parts. If one can come up with a complete representation which can represent the rules performed by any \mergeresR proof system, then one can show the required simulation.
    
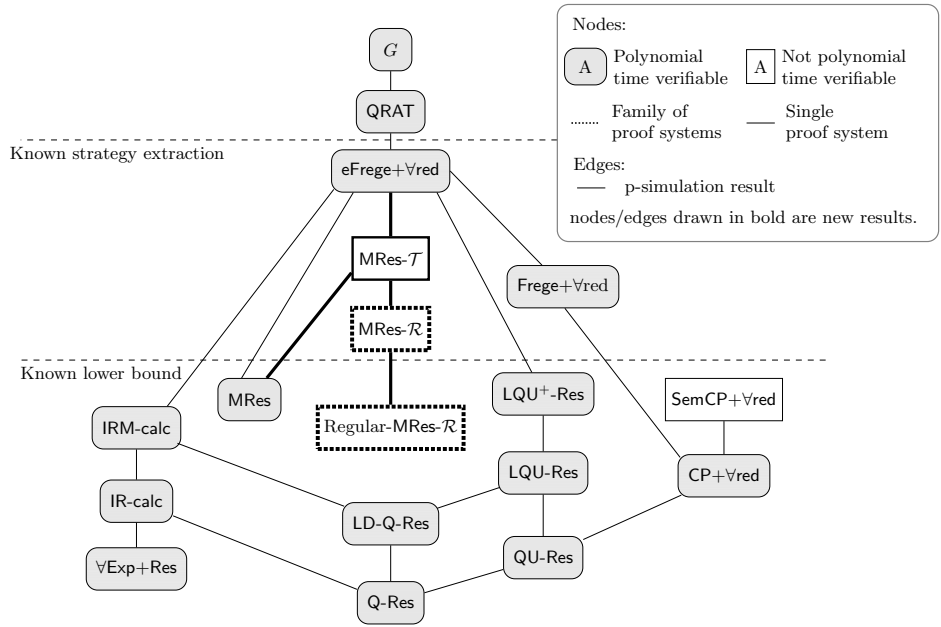
\begin{figure}[h]
\centering
\resizebox{\columnwidth}{!}{%
%\begin{center}
%\begin{scaletikzpicturetowidth}{\textwidth}
%\begin{tikzpicture}[scale=\tikzscale]
\begin{tikzpicture}%[scale=0.7, transform shape]
\node[calcn,draw,minimum width = 0.75cm, minimum height = 0.75cm] (a) at (-0.7,7.2) {$G$};
\node[calcn,draw,minimum width = 0.75cm, minimum height = 0.75cm] (b) at (-0.7,6.1) {\qrat};

\node[calcn,draw,minimum width = 0.75cm, minimum height = 0.75cm] (c) at (-0.7,5.05) {\qef};
\node[rectangle,draw,minimum width = 0.75cm, minimum height = 0.75cm, very thick] (d) at (-0.7,3.5) {\mergeresT};
\node[rectangle,draw,minimum width = 0.75cm, minimum height = 0.75cm, densely dotted, line width=2pt] (e) at (-0.7,2.25) {\mergeresR};
\node[calcn,draw,minimum width = 0.75cm, minimum height = 0.75cm] (g) at (2.35,3) {\qff+$\forall$red};

\draw[black, dashed] (7,1.7) -- (-7.25,1.7)node[pos=.9, below] {Known lower bound};
\draw[black, dashed] (2.25,5.6) -- (-7.25,5.6)node[pos=.82, below] {Known strategy extraction};

\node[calcn,draw,minimum width = 0.75cm, minimum height = 0.75cm] (f) at (-3.2,1) {\mergeres};
\node[rectangle,draw,minimum width = 0.75cm, minimum height = 0.75cm, densely dotted,line width=2pt] (h) at (-0.7,0.5) {Regular-\mergeresR};
\node[calcn,draw,minimum width = 0.75cm, minimum height = 0.75cm] (i) at (-5.2,0.5) {\irmc};
\node[calcn,draw,minimum width = 0.75cm, minimum height = 0.75cm] (j) at (-5.2,-0.8) {\irc};
\node[calcn,draw,minimum width = 0.75cm, minimum height = 0.75cm] (k) at (-5.2,-2) {\ecalculus};
\node[calcn,draw,minimum width = 0.75cm, minimum height = 0.75cm] (l) at (-0.7,-1.2) {\lqrc};

\node[rectangle,draw,minimum width = 0.75cm, minimum height = 0.75cm] (m) at (5.2,1) {\sqcp};%{SemCP+$\forall red$};
\node[calcn,draw,minimum width = 0.75cm, minimum height = 0.75cm] (n) at (5.2,-0.35) {\qcp}; %{CP+$\forall_{red}$};
\node[calcn,draw,minimum width = 0.75cm, minimum height = 0.75cm] (o) at (2,-1.8) {\qurc};
\node[calcn,draw,minimum width = 0.75cm, minimum height = 0.75cm] (q) at (2,1.1) {\lquprc};
\node[calcn,draw,minimum width = 0.75cm, minimum height = 0.75cm] (r) at (2,-0.3) {\lqurc};

\node[calcn,draw,minimum width = 0.75cm, minimum height = 0.75cm] (p) at (-0.7,-2.6) {\qrc};

%\node[cross] (j) at (-0.5,4) {};
\draw[black, -] (a) -- (b);
\draw[black, -] (b) -- (c);
\draw[black, -, ultra thick] (c) -- (d);
\draw[black, -, ultra thick] (d) -- (e);
%\draw[black, -, very thick, shorten >= -2pt ] (e) -- (f);
\draw[black, -] (c.360) -- (g);
\draw[black, -, ultra thick] (e) -- (h);
\draw[black, -, shorten <= -2pt] (c.200) -- (i.35);
\draw[black, -] (i) -- (j);
\draw[black, -] (j) -- (k);
\draw[black, -, shorten >= -1pt, shorten <= -2pt] (i) -- (l);
\draw[black, -] (l) -- (p);
\draw[black, -, shorten >= -2pt] (g.275) -- (n.155);
\draw[black, -] (m) -- (n);
\draw[black, -, shorten <= -2pt] (n) -- (o);
\draw[black, -] (o) -- (p);
\draw[black, -] (j) -- (p);
\draw[black, -] (c.210) -- (f.110);
\draw[black, -] (c.335) -- (q);
\draw[black, -] (q) -- (r);
\draw[black, -] (r) -- (o);
\draw[black, -, ultra thick] (d.200) -- (f);
\draw[black, -, shorten <= -1pt, shorten >= -1pt] (r) -- (l);
\draw[gray,semithick,rounded corners=2mm] (2.25, 8) rectangle (9,3.8);
\node[] at (3,7.65) {Nodes:};
\node [calcn,minimum height=.65cm, minimum width=.5cm, inner sep=.1cm, font=\small, label={[align=left]0: Polynomial\\ time verifiable}] at (2.75, 6.9) {\textnormal{\hspace{1.5mm}A\hspace*{1.5mm}}}; 
\node [rectangle, draw, minimum width = 0.5cm, minimum height = 0.6cm, font=\small,label={[align=left]0: Not polynomial\\ time verifiable}] at (5.85,6.9) {\textnormal{A}}; 

\draw[densely dotted, black, thick] (2.5, 5.9) -- (3,5.9)node[pos=.7, right, label={[align=left]0: Family of\\proof systems}] {}; 
\draw[-, black] (5.6, 5.9) -- (6.1,5.9)node[pos=.7, right, label={[align=left]0: Single\\proof system}] {}; 

\node[] at (3,5.15) {Edges:};
\draw[-, black] (2.6, 4.75) -- (3.1,4.75)node[pos=1, right, label={[align=left]0: p-simulation result}] {};
\node[] at (5.55,4.2) {nodes/edges drawn in bold are new results.};
%\node[rectangle,draw=black,minimum height=.5cm, minimum width=.5cm, inner sep=.1cm, font=\small] (a) at (5.65,2.3) {A};
%\node[rectangle,draw=black,minimum height=.5cm, minimum width=.5cm, inner sep=.1cm, font=\small, label=right:{: A strictly simulates B}] (b) at (7.3,2.3) {B};
%\draw[black, -{Stealth[scale=2]}] (a) to (b);
\end{tikzpicture}
%\end{scaletikzpicturetowidth}
%\end{center}
%\end{tikzpicture}
}
\caption{\textit{Various QBF proof systems and efficient simulations.} New results and proof systems are written in bold. \mergeresT belongs to \mergeresR. Regular \mergeresR are below the `known lower bound' dashed line, due to Theorem~\ref{thm:reg-meresr-lower}. \mergeresT p-simulates \mergeres due to Proposition~\ref{prop:mrest-simulates-mres}. For the simulations of \mergeres, \irmc, and \lquprc by \qef, and other known simulations refer~\cite[Fig.~1]{Chew2021}.}
\label{fig:systems_diag}
\vspace{-0.4cm}
\end{figure}
     We use this observation and define a new complete representation, denoted as the $T$-representation (ref.~Section \ref{sec:t-representation}). To handle all the \mergeresR rules, we came up with this hybrid representation consisting of both circuits and branching programs. It is capable of representing all the refutation rules allowed in any \mergeresR proof. To achieve this we introduce a new type of node, namely `\#' node, which deals with the new consistency checking property (ref. Fig~\ref{fig:subfig1}). We also show that $T$ representation is a complete representation, therefore the corresponding \mergeresT proof system (Section~\ref{sec:t-representation}) belongs to \mergeresR.

    The idea of the simulation is to convert every valid $\mathcal{P}$-proof ($\mathcal{P} \in$ \mergeresR) efficiently into an \mergeresT-proof as explained in Theorem~\ref{theorem_8}. Then, as \mergeresT admits strategy extraction by definition, we use the idea from \cite{Chew2021} to simulate the valid \mergeresT-proof in \qef. Thereby proving that \qef simulates any valid refutations from proof system in \mergeresR.

    This simulation result is a way forward towards uniform certification in QBF~\cite{Chew2021}. In~\cite{Chew2021}, they used a very distinguished technique that allows \qef to simulate few proof systems that admit strategy extraction. Using this technique, they showed that \qef can simulate proof systems \mergeres, \irmc and \lquprc. 
    We use the same technique and prove that \qef even simulates the family of proof systems \mergeresR. 

    Observe that the proposed \qef simulation algorithm (Section~\ref{sec:qef-simulates-mresr}) of the proof systems $\mathcal{P}$ in \mergeresR assumes that the given $\mathcal{P}$-refutations are valid. As a result one cannot use this simulation algorithm to efficiently verify the correctness of the given $\mathcal{P}$-refutations. That is, even if the resulting \qef proof is valid (which is efficiently verifiable), one cannot infer if or not the initial $\mathcal{P}$-refutation is valid. However, the proposed simulation algorithm always produces a valid \qef proof for a given valid $\mathcal{P}$ proof.
    
    Further, for the current simulation order among QBF proof systems refer Fig~\ref{fig:systems_diag}.

    %This emphasises that the \qef proof system to be much more powerful.
    
    %hence the strategies can be re-represented in a predefined format by checking which rule has been used in the corresponding clause part. Which would then be simulated by \qef.
\end{enumerate}

\subsection{Organization of the paper}
We present important notations and preliminaries used in this paper in Section~\ref{sec:preliminaries}. In Section~\ref{section_mresr}, we present the new family of proof systems \mergeresR. We prove the soundness and refutational completeness of proof systems in \mergeresR in Section~\ref{sec:soundness-completeness}. In Section~\ref{sec:qef-simulates-mresr}, we show that \qef simulates proof systems in \mergeresR. We establish the lower bound results for every regular proof system in \mergeresR in Section~\ref{sec:lower-bound-regular}. Finally, we conclude and present a few open problems in Section~\ref{sec:conclusion}.

\section{Notations and Preliminaries}\label{sec:preliminaries}
For a Boolean variable $x$, its literals can be $x$ and $\overline{x}$.
A clause $C$ is a disjunction of literals and a conjunctive normal form (CNF) $F$ is a conjunction of clauses.
We denote the empty clause by $\bot$. $vars(C)$ is a set of all variables in $C$ and width$(C) = |vars(C)|$.\\
A \textbf{proof system}~\cite{CookR79} for a non-empty language $L \subseteq \{ 0, 1\}^*$ is a
polynomial time computable function $f : \{ 0, 1\}^* \rightarrow \{ 0, 1\}^*$ such that Range($f$) = $L$. For string $x \in L$, we say a string $w \in \{ 0, 1\}^*$ is an $f$-proof of $x$ if $f(w)$ = $x$. 
A proof system $f$ for $L$ is complete if and only if for every $x \in L$ we have a corresponding $f$-proof for $x$. A proof system $f$ for $L$ is sound if and only if the existence of an $f$-proof for $x$ implies that $x \in L$. Informally, a proof system is a function $f$ which maps proofs to theorems (or contradictions). 

A proof system $f$ p-simulates (polynomially simulates) another proof system $g$, if every $g$-proof of input $x$ can be translated into an $f$-proof for the same input in polynomial time w.r.t the size of the $g$-proof. We denote this as $f \leq_p q$. Proof systems for $L = \text{FQBFs/ TQBFs}$ are said to be QBF proof systems where, FQBFs (TQBFs) denote the set of all false (true) QBFs.\\
{\bf Quantified Boolean formulas:} QBFs are an extension of the propositional Boolean formulas where each variable is quantified with one of $\{ \exists,\forall \}$, the symbols having their general semantic definition of existential and universal quantifier respectively.

In this paper, we assume that QBFs are in closed prenex form with CNF matrix i.e., we consider the form $Q_1 X_1 . . .Q_k X_k$. $\phi(X)$, where $X_i$ are pairwise disjoint sets of variables; $Q_i$ $\in$ \{$ \exists$, $ \forall$\} and $Q_i \neq Q_{i+1}$, and $\phi(X)$ is in CNF form over variables $X = X_1 \cup \dots \cup X_k$, called the matrix of the QBF. We denote QBFs as $\mathcal{F} := Q.\phi$ in this paper, where $Q$ is the quantifier prefix. If $x \in X_i$ then we denote $Q(x)$ to be equal to $Q_i$. For a variable $x$ if $Q(x) = \exists $ (resp.~$Q(x) = \forall$), we call $x$ an existential (resp.~universal) variable.
If a variable $x$ is in the set $X_i$, any $y \in X_j$ where $j<i$, we say that $y$ occurs to the left of $x$ in the quantifier prefix and write $y {\leq}_Q x$. On the other hand, if $j>i$ we say that $y$ occurs to the right of $x$ in the quantifier prefix and write $y \geq_Q x$. The set of existential variables to the left of a universal variable $u$ will be denoted by $L_Q (u)$ in this paper. 

Let $C \in \phi$ and $Q(u) = \forall$, then the `falsifying $u$-literal' is defined to be $0$ if $u \in C$, and $1$ if $\overline{u} \in C$ and $\ast$ if $u \notin vars(C)$. Also, the existential subclause of $C$ is the clause formed by only the existential literals from $C$. 

If $S$ is any set of variables, a complete assignment of $S$ will be an assignment which assigns every variable in $S$ to either $1$ or $0$. Similarly, a partial assignment is an assignment which assigns a subset of variables in $S$ to either $1$ or $0$. Note that the $vars(S)$ that have not been assigned to $1$ or $0$ in a particular partial assignment of $S$ are denoted as having an assignment of `$\ast$'.
We denote $\langle S \rangle$ as the set of all possible complete assignments of $S$ and $\langle \langle S \rangle \rangle$ as the set of all possible partial assignments of $S$. 

{\bf QBFs as a game:}~\cite{AroraBarak09}
Given a QBF $\mathcal{F}= \exists X_1 \forall X_2 \dots \exists X_n. \phi$ we may view it as game between universal and existential player. The rules of the game being that according to the quantification sequence the players assign values to the sets $X_i$s alternatively. At the end, when substituting the complete assignment to all variables in $\phi$, if $\phi$ evaluates to $1$ (resp.~$0$) the existential (resp.~universal) player wins.

For a QBF $Q.\phi$, a \textbf{strategy} of universal player is a decision function that returns the assignment to all universal variables of $Q$, where the decision for each $u$ depends only on the variables in $L_Q(u)$. If $H^u$ is the strategy for the universal variable $u$ then, $vars(H^u)$ is the subset of existential variables from $L_Q(u)$ which are actually used in building the strategy $H^u$.

\textbf{Winning strategy} for the universal player is a strategy which gives an assignment to all universal variables of the given QBF for every possible assignment of existential variables, such that the substitution of this complete assignment falsifies the QBF.
%is a strategy which makes this player win against every assignment of the other player.% variables.
Winning strategy of the universal player is also called a countermodel in case of a false QBF.
A QBF is false iff there exists a winning strategy for the universal player~\cite{AroraBarak09}.

We say that a QBF proof system $f$ admits \textbf{strategy extraction} if for any given valid $f$-proof of a false QBF $\mathcal{F}$, one can compute a winning strategy for the universal player in the time polynomial to the size of the $f$-proof.

As said earlier, strategies are basically decision functions. For the portrayal of the same, many representations can be used like truth tables, directed acyclic graphs (DAGs), merge maps, etc. A complete representation is the one in which every possible finite decision function can be represented.\\
\noindent 
\textbf{Resolution}~\cite{Rob63} is the most studied redundancy rule in both SAT and QBF worlds, we define the same below:
$$\frac{C \vee x \hspace{10mm} D \vee \overline{x}}{C \vee D},$$
where $C,D$ are clauses and $x$ is the pivot variable.
The clause ($C \lor D$) is called the resolvent.
We denote this step as `$res(C\lor x,D\lor \overline{x},x)$' throughout the paper.\\
Next, we define a few QBF proof systems that we require in this paper. %namely, \mergeres proof system and the \qef proof system.

{\bf {\qrc}}~\cite{KBKF95} is one of a basic QBF proof system. It is an extension of the resolution proof system for QBFs. It allows the resolution rule defined above with the pivot variable being existential. For dealing with the universal variables, it defines a `universal reduction' rule as follows:

The \textbf{Universal Reduction} (UR) rule of \qrc allows dropping of a universal variable $u$ from a clause $C$ in the QBF, provided no existential variable $x \in C$ appears to the right of $u$ in the quantifier prefix.

\subsection{\mergeres}
\mergeres is a proof system for false QBFs introduced in~\cite{mres_paper}.
We describe \mergeres briefly in this section, please refer to \cite{mres_paper} for its formal definition.

For a false QBF $Q.\phi$, an \mergeres refutation will be a sequence of lines of the form $L_i=(C_i,\{M^u_i\})$; where $C_i$ is a clause consisting of only existential literals and $\{ M^u_i \}$ is the set of merge maps of each universal variable $u \in Q$. 

The \textbf{Merge maps} represent the partial strategies for each universal variable at any line. The merge map $M^u_i$ is a decision branching graph with definite strategies $\{0,1,\ast\}$ at the leaves nodes ($\ast$ is used when no strategy for $u$ exists till that line). The intermediate nodes of merge map $M^u_i$ branch on some existential variable (say $x$) $\in L_Q(u)$. That is, if $L_i=res(L_a,L_b,x)$ for some $a,b<i$, then
$M^u_i$ will get connected to $M^u_a$ with an edge label of $\overline{x}$ and to $M^u_b$ with an edge label of $x$.

\noindent 
An important property used in \mergeres refutation rules is defined below:

\textbf{Isomorphism:} Two merge maps are isomorphic if and only if there exists a bijection mapping from the nodes of one to that of another. In other words, two isomorphic merge maps represent exactly the same strategy. 

\noindent Two operations needed for \mergeres refutations are defined below:

\textbf{Select} operation is defined on two merge maps. If they are isomorphic, then it outputs one of them. Or, if one of them is trivial (i.e $\ast$), then it outputs the other.
%\textbf{Consistency:} Two merge maps are consistent iff for every common line node in both maps, the strategies for that node are the same. [NO NEED AS THEY PROVED ALL LINES WILL BE CONSISTENT IN MRES PAPER.]

\textbf{Merge($M^u_a, M^u_b, n, x$)} operation is defined when $a,b<n$, and it returns a new merge map where the new root node is connected to $M^u_a$ with $\overline{x}$ and to $M^u_b$ with $x$. Also, if any common line nodes exist in $M^u_a, M^u_b$, it merges them into a single node.\vspace{0.2cm}\\
Now we define the \mergeres proof system:\\
For a false QBF $Q.\phi$, the \mergeres proof $\Pi := L_1,L_2,...,L_m$ where every line $L_i := (C_i , \{ M^u_i : $for every universal variable $u$ in $Q\} )$ is derived using either an `Axiom' step or a `Resolution' step. In the axiom step, $C_i$ will be the existential subclause of some $C \in \phi$ and every $M^u_i$ will be a leaf node with the falsifying $u$-literal of $C$.
In the resolution step, $C_i$ is obtained from $res(C_a,C_b,x)$ where $x$ is an existential variable and $a,b <i$. For this step to be valid, each $M^u_i$ must either be equal to select($M^u_a,M^u_b$) or if $x <_Q u$ then it can be equal to merge($M^u_a,M^u_b$,i,x).

The final line $L_m$ is the conclusion of $\Pi$, and $\Pi$ is a refutation of $Q.\phi$ iff $C_m = \bot$. %In this case observe that $\{{M_m}^u : u \in U\}$ is a winning strategy for the universal player. %Size of $\pi$ = $m$.
 $G_\Pi$ be the derivation graph corresponding to $\Pi$ with edges directed from the hypothesis to the resolvent (i.e
%antecedents to the consequent (i.e 
from the axioms to the final line).
A refutation $\Pi$ is said to be regular if no leaf-to-root path in $G_\Pi$ has any existential variable $x$ as a pivot more than once.
For some given line $L$, $\Pi_L$ is defined as the sub-derivation of $\Pi$ deriving the line $L$.
\subsection{\qef}
Frege systems are fundamental proof systems of propositional logic. Lines in a Frege proof are formulas inferred from the previous lines via few sound rules. The rules are not important as all Frege systems are p-equivalent, the only condition is that a Frege system needs to be sound and complete.
So w.l.o.g, we can assume that `modus ponens' is the only rule in a Frege system. The modus ponens is defined as: if $A \rightarrow B$ and $A$ are present in the hypothesis then $B$ can be logically implied by the hypothesis.
For a detailed definition and explanation refer~\cite{Jankrajicek19}.

Extended Frege (\qeff)~\cite{efrege_paper} is an extention of Frege systems which allows introduction of new variables not present in previous lines of the proof. This rule allows lines of the form $v \leftrightarrow f(S)$ where $v$ is a new variable and $f$ can be any formula on the set of variables $S$, where $v\notin S$.

For QBFs, \qeff is modified to be \qef (Extended Frege + $\forall$ reduction)~\cite{fregeQBF} which requires that the extention variable must be added in the prefix and quantified to the right of the variables used to define it.
To deal with the universal variables, the universal reduction (UR) rule as defined in \qrc is introduced into \qef. The formal definition is as follows:
$$\frac{L_j}{L_i = L_j[u/B]}~(\forall red)$$
where $L_j$ is some previous line in the \qef proof, $u$ is a universal variable that is also rightmost in the prefix among all variables in $L_j$ and $B$ is the Herband function of $u$~\cite{fregeQBF}.
That is, a universal variable $u$ in a formula can be replaced by $0$ or $1$ when no other variable in that formula are to the right of $u$ in the prefix.\vspace{0.2cm}\\
For the rest of known QBF-proof systems depicted in Fig~\ref{fig:systems_diag}, refer to \cite{Shukla17}.
\noindent We define the following formulas needed for proving lower bound later in the paper.
\subsection{Completion Principle Formulas~\cite{JanotaM15}}\label{CR_formula}
The QBFs \crn are defined as follows:\\
\crn = $\underset{i,j \in [n]}{\exists} x_{ij}$, $\forall z$, $\underset{i \in [n]}{\exists} a_{i}$, $\underset{j \in [n]}{\exists} b_{j}$
$\Big( \underset{i,j \in [n]}{\bigwedge} ( A_{ij} \wedge B_{ij} ) \Big) \wedge L_A \wedge L_B$\\
where,

$A_{ij}= x_{ij} \lor z \lor a_{i}$ \hspace{1.8cm} $B_{ij}= \overline{x_{ij}} \lor \overline{z} \lor b_{j}$

$L_A= \overline{a_1} \lor \dots \lor \overline{a_n}$ \hspace{1.7cm} $L_B= \overline{b_1} \lor \dots \lor \overline{b_n}$ \vspace{0.2cm}

\noindent For any \crn formula, we define the sets
$A := \{ a_1, a_2,...,a_n \}$ and $B:= \{ b_1, b_2,...,b_n \}$ as the set of all $a$, $b$ variables respectively. 

%Let $X$ be a set of variables, then any $X$-regular refutation $\pi$ of a false QBF $\mathcal{F}$ is the refutation which only resolves once on every path in the $G_\pi$ i.e $x$ as a pivot.
\comment{
\subsubsection{KBKF-$lq[n]$ Formulas\cite{BalabanovWJ14}} These QBFs are a variant of the formulas introduced by Kleine Büning et al.\cite{resolution}\\
KBKF-lq[n]$= \exists d_1,e_1, \forall x_1, \exists d_2,e_2, \forall x_2, . . . , \exists d_n,e_n, \forall x_n, \exists f_1,f_2,...,f_n$

\hspace{2cm}$A_0 \underset{i \in [n]}{\bigwedge} \Big( A_i^d \land A_i^e \Big) \underset{i \in [n]}{\bigwedge} \Big( B_i^0 \land B_i^1 \Big) $\\
where,

$A_0 = \{ \overline{d_1}, \overline{e_1}, \overline{f_1}, . . . , \overline{f_n} \}$

$A_i^d= \{ d_i, x_i, \overline{d_{i+1}}, \overline{e_{i+1}}, \overline{f_1}, . . . , \overline{f_n} \}$ \hspace{0.1cm} $A_i^e= \{ e_i, \overline{x_i}, \overline{d_{i+1}}, \overline{e_{i+1}}, \overline{f_1}, . . . , \overline{f_n} \}$ \hspace{0.02cm} $\forall i \in [n$-$1]$

$A_n^d= \{ d_i, x_i, \overline{f_1}, . . . , \overline{f_n} \}$ \hspace{1.6cm} $A_n^e= \{ e_i, \overline{x_i}, \overline{f_1}, . . . , \overline{f_n} \}$

$B_i^0= \{ x_i, f_i, \overline{f_{i+1}}, . . . , \overline{f_n} \}$ \hspace{1.35cm} $B_i^1= \{\overline{x_i}, f_i, \overline{f_{i+1}}, . . . , \overline{f_n} \}$ \hspace{1.25cm} $\forall i \in [n$-$1]$

$B_n^0= \{ x_i, f_i \}$ \hspace{3.1cm} $B_n^1= \{\overline{x_i}, f_i \}$ \vspace{0.1cm}\\
Shorthand notations:

$A_i = \{ A_i^d, A_i^e \}$ \hspace{3.05cm} $B_i = \{ B_i^d, B_i^e \}$ \hspace{2.75cm} $\forall i \in [n]$}
\section{\mergeresR: A new family of proof systems for false QBFs}\label{section_mresr}
We define a family of proof systems \mergeresR, inspired from the \mergeres proof system. In \mergeres~(\cite{mres_paper}), strategies are built within the proof and are represented by merge maps. We observed that merge maps or any specific representations of strategies are not important for the soundness or completeness of the proof system. Since, isomorphism problem is efficient in merge maps, they make the proof systems polynomial time verifiable. 

Based on this observations, we define a family of \mergeresR where
%It contains a proof system for  $R$ is a complete representation used for strategies. 
every proof system $\mathcal{P} \in$ \mergeresR has it's own complete representation to represent the strategies.
To allow the use of arbitrary representations in \mergeresR, we introduce consistency checking rules for strategies which are not as efficient as checking isomorphism for \mergeres. %Note that consistency is a genlarization of isomorphism i.e isomorphism $\Rightarrow$ consistency.
As a result, our proof systems are not polynomial time verifiable. However their soundness \& completeness doesn't depend on their representations, which makes them interesting.

We use the idea of consistency checking from~\cite{davis_putnam}, which uses the same for DQBFs. For simplicity, we use the same notations from~\cite{davis_putnam} whenever possible. We begin by defining some important notations and operations needed before actually defining the \mergeresR systems. 
%Also, we find the isomorphism checkings done by the \mergeres system to be unnecessarily stringent and so we use the consistency checking introduced for DQBFs in \cite{davis_putnam}.
\subsection{Important notations used in \mergeresR}\label{sec:mresrNotations}
To begin, let us define what consistency means for any two assignments of a set of variables.
\begin{definition}[\cite{davis_putnam}]
Let $X$ be any set of variables and $\varepsilon,\delta \in \langle \langle X \rangle \rangle$. We say that $\varepsilon$ and $\delta$ are consistent, denoted by $\varepsilon \simeq \delta$, if for every $x \in X$ for which $\varepsilon(x) \neq \ast$ and $\delta(x) \neq \ast$ we have $\varepsilon(x) = \delta(x)$.
\end{definition}
Let $H_u$ and $H_u'$ be individual strategy functions for the universal variable $u$, the consistency between two strategies is defined as follows:

We say that $H_u$ and $H_u'$ are consistent (written $H_u \simeq H_u'$) when $H_u(\varepsilon) \simeq H_u'(\varepsilon)$ for each $\varepsilon \in \langle \langle L_Q(u) \rangle \rangle$. Recall that $L_Q(u)$ are the existential variables to the left of $u$ in the prefix.
In other words $H_u$ and $H_u'$ are consistent, if the $u$-assignment given by $H_u(\varepsilon)$ and  $H_u'(\varepsilon)$ should be consistent for every possible $L_Q(u)$-assignment $\varepsilon$. 

By a change in notation, we can see (partial) assignments as both functions and sets of literals, i.e. an assignment $\varepsilon$ corresponds to the set of literals it satisfies. %Precisely, $\varepsilon$ is the set $\{ x: \varepsilon(x)=1 \}$ and $\{ \overline{x} : \varepsilon(x)=0 \}$.
For example, $\{ x_1, x_2, \overline{x_3}, \overline{x_4} \}$ represents an assignment which sets 1 to the variables $x_1$ and $x_2$ and 0 to $x_3$ and $x_4$.
In this notation as sets of literals, a union ($\cup$) of assignments $\varepsilon,\delta$ is defined when $\varepsilon \simeq \delta$ % the usual way as when dealing with sets.
and it is equal to $\varepsilon \cup \delta$.\vspace{0.2cm}\\
We now define a union operation (`$\circ$') on two consistent strategies $H_u$ and $H_u'$.
\begin{definition}[\cite{davis_putnam}]
Given two consistent strategies $H_u$ and $H_u'$ (i.e., $H_u \simeq H_u'$), we define the union strategy $H_u''$ of $H_u$ and $H_u'$, denoted by $H_u'' = H_u \circ H_u'$, as follows: 

\centering$H_u''(\varepsilon)= H_u(\varepsilon) \cup H_u'(\varepsilon)$ for each $\varepsilon \in \langle L_Q(u) \rangle.$
\end{definition}
For example, if $H_u \& H_u'$ be defined as below, then $H_u''= H_u \circ H_u'$ will be:\\  %Let $H_u = (x_1 \land 1) \lor (\overline{x_1} \land *)$ and $H_u' = (x_1 \land *) \lor (\overline{x_1} \land 0)$
\hspace*{1cm}\(
H_u = \left\{ 
\begin{array}{l l}
  1 & :\quad x\\
  * & :\quad \overline{x}\\
\end{array} \right.
\) \hspace{1.7cm}
\(
H_u' = \left\{ 
\begin{array}{l l}
  * & :\quad x\\
  0 & :\quad \overline{x}\\
\end{array} \right.
\)\\
\hspace*{3.3cm}\(
H_u'' = \left\{ 
\begin{array}{l l}
  1 \cup * = 1 & :\quad x\\
  * \cup 0 = 0 & :\quad \overline{x}\\
\end{array} \right.
\)
\vspace{0.2cm}\\
We now define a if-else operation (`$\bowtie$') on any two strategies $H_u$ and $H_u'$.
\begin{definition}[\cite{davis_putnam}]\label{if_else_def}
Given any two strategies $H_u$ and $H_u'$ and an existential variable $x$, we define the if-else operation of $H_u$ and $H_u'$ on $x$ to give the strategy $H_u''$, denoted by $H_u'' = H_u \overset{x}{\bowtie} H_u'$, for every $\varepsilon \in \langle L_Q(u) \rangle$ as follows: \vspace{0.2cm}\\
\hspace*{3.3cm}\(
H_u''(\varepsilon) = \left\{ 
\begin{array}{l l}
  H_u(\varepsilon) & :\quad \varepsilon(x)=1\\
  H_u'(\varepsilon) & :\quad \varepsilon(x)=0\\
\end{array} \right.
\)
\end{definition}
\comment{
on previous inputs \( 
H_u'' = \left\{
\begin{array}{l l}
  1 & :\quad x x_1\\
  \ast & :\quad x \overline{x_1}\\
  \ast & :\quad \overline{x} x_1\\
  0 & :\quad \overline{x} \overline{x_1}\\
\end{array} \right.
\)}
For example, if $H_u \& H_u'$ be defined as below, then $H_u'' = H_u \overset{x}{\bowtie} H_u'$ will be:
\hspace*{1cm}\(
H_u = \left\{ 
\begin{array}{l l}
  1 & :\quad y\\
  * & :\quad \overline{y}\\
\end{array} \right.
\) \hspace{1.7cm}
\(
H_u' = \left\{ 
\begin{array}{l l}
  * & :\quad z\\
  0 & :\quad \overline{z}\\
\end{array} \right.
\) \hspace{0.7cm};
\hspace*{0.7cm}\(
H_u'' = \left\{ 
\begin{array}{l l}
  1 & :\quad x y\\
  \ast & :\quad x \overline{y}\\
  \ast & :\quad \overline{x} z\\
  0 & :\quad \overline{x} \overline{z}\\
\end{array} \right.
\)

Note that the input strategies $H_u$, $H_u'$ need not be consistent for an `$\bowtie$' operation, but they must be in case of an `$\circ$' operation.

%In other words, 
%$H_u''(\varepsilon) = H_u(\varepsilon)$ if $\varepsilon(x)=1$ else if $\varepsilon(x)=0$ then, $H_u''(\varepsilon) = H_u(\varepsilon)$
\subsection{Definition of \mergeresR}\label{sec:def-mergeR}
Let $\Phi= Q.\phi$ be a QBF with existential variables $X$ and universal variables $U$. A \mergeresR derivation of $L_m$ from $\Phi$ is sequence $\pi = L_1,...,L_m$ of lines where each $L_i= (C_i,\{{H}^{u}_i: u \in U\})$ in which at least one of the following holds for $i \in [m]$:
\begin{itemize}
    \item [a.] \textbf{Axiom}. There exists a clause in $C \in \phi$ such that $C_i$ is the existential subclause of $C$, and for each $u \in U$, $H^u_i$ is the strategy function for $u$ mapping it to the falsifying $u$-literal for $C$ or,
    \item [b.] \textbf{Resolution}. There exist integers $a,b < i$ and an existential pivot $x \in X$ such that $C_i = res(C_a,C_b,x)$ and for each $u \in U$:
    \begin{itemize}
        \item [i.] if $x <_Q u$, then $H^u_i= H^u_b \overset{x}{\bowtie} H^u_a$
        \item [ii.] else if $x >_Q u$, then $H^u_i= H^u_a \circ H^u_b$.
    \end{itemize}
\end{itemize}
%The final line $L_m$ is the conclusion of $\pi$, and
$\pi$ is a refutation of $\Phi$ iff $C_m=\bot$. Size of $\pi$ is the number of lines i.e $|\pi|= m$. Observe that similar to \mergeres, proof systems in \mergeresR have only existential literals in the clause part of the lines in a proof.

Let $S$ be a subset of the existential variables $X$ of a false QBF $\mathcal{F}$. We say that a $\mathcal{P}$-refutation $\pi$ of $\mathcal{F}$ (where $\mathcal{P}\in$ \mergeresR) is $S$-regular if for each $x \in S$, there is no leaf-to-root path in $G_{\pi}$ that uses $x$ as pivot more than once. An $X$-regular proof is simply called a regular proof.

\section{Soundness and Completeness of \mergeresR}\label{sec:soundness-completeness}
In this section, we show that each proof system in \mergeresR is sound and refutationally complete for false QBFs. 
\subsection{Soundness}
The soundness proof of \mergeresR follows closely to that of the \mergeres proof system. The following lemma proves the soundness of \mergeresR family of proof systems.
\begin{lemma}\label{lemma_sound}
Let $\mathcal{P} \in$ \mergeresR be any proof system. Let $\pi = L_1, L_2, \dots, L_m$ be a valid $\mathcal{P}$-proof of QBF $\Phi= Q.\phi$. Then, the strategy functions $ \{H^u_m : u\in U\}$ in the conclusion line $L_m$ of
%of  \mergeresR refutation of a QBF 
$\pi$, will form a countermodel for $\Phi$. 
\end{lemma}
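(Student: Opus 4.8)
The plan is to prove, by strong induction on the line index $i$, a strengthening of the statement asserting that the strategies at every line encode a partial falsifying strategy; the lemma then follows by specialising to the conclusion line. The invariant I would carry is: for each line $L_i = (C_i, \{H^u_i\})$ and each complete existential assignment $\varepsilon \in \langle X \rangle$ that falsifies $C_i$, there is an axiom clause $C \in \phi$ such that (a) $\varepsilon$ falsifies the existential subclause of $C$, and (b) for every universal $u \in vars(C)$ we have $H^u_i(\varepsilon) \neq \ast$ and $H^u_i(\varepsilon)$ equals the falsifying $u$-literal of $C$. Properties (a) and (b) together say that $\varepsilon$ extended by the responses $\{H^u_i(\varepsilon)\}$ falsifies $C$, hence the whole matrix $\phi$. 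Because the operations $\bowtie$ and $\circ$ produce functions over $\langle L_Q(u)\rangle$, each $H^u_m$ is a legal strategy; and since $C_m = \bot$ is falsified by every $\varepsilon \in \langle X \rangle$, the invariant applied at $L_m$ shows that $\{H^u_m\}$ responds to every existential assignment by falsifying some clause of $\phi$. Fixing the remaining $\ast$-values arbitrarily (they never land on variables of the falsified clause) yields a total winning strategy, i.e.\ a countermodel.

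For the base case (Axiom), $C_i$ is the existential subclause of some $C \in \phi$ and each $H^u_i$ is the constant strategy returning the falsifying $u$-literal of $C$; both (a) and (b) then hold with this same $C$, since the falsifying $u$-literal is non-$\ast$ exactly for $u \in vars(C)$. For the inductive step (Resolution) with $C_i = res(C_a, C_b, x)$, I would write $C_a = C_a' \vee x$ and $C_b = C_b' \vee \overline{x}$, so any $\varepsilon$ falsifying $C_i = C_a' \vee C_b'$ falsifies both $C_a'$ and $C_b'$. The argument splits on $\varepsilon(x)$: if $\varepsilon(x)=0$ then $\varepsilon$ falsifies $C_a$ and I invoke the induction hypothesis at line $a$; if $\varepsilon(x)=1$ then $\varepsilon$ falsifies $C_b$ and I invoke it at line $b$. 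In each branch I must check that the combined strategy $H^u_i$ reproduces the value ($H^u_a(\varepsilon)$, respectively $H^u_b(\varepsilon)$) supplied by the hypothesis, for every $u \in vars(C)$.

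The main obstacle, and the only place the new consistency machinery enters, is reconciling the two strategy-combining operations. For a universal $u$ with $x <_Q u$ the rule sets $H^u_i = H^u_b \overset{x}{\bowtie} H^u_a$, and the definition of $\bowtie$ gives $H^u_i(\varepsilon) = H^u_a(\varepsilon)$ when $\varepsilon(x)=0$ and $H^u_i(\varepsilon) = H^u_b(\varepsilon)$ when $\varepsilon(x)=1$, matching the branch exactly. For a universal $u$ with $x >_Q u$ the rule sets $H^u_i = H^u_a \circ H^u_b$, so $H^u_i(\varepsilon) = H^u_a(\varepsilon) \cup H^u_b(\varepsilon)$ irrespective of $\varepsilon(x)$ (here $x \notin L_Q(u)$). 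At this point I would use validity of the proof, which guarantees $H^u_a \simeq H^u_b$, so the union never overwrites a non-$\ast$ value: in the $\varepsilon(x)=0$ branch the hypothesis supplies $H^u_a(\varepsilon) \neq \ast$, and consistency forces $H^u_b(\varepsilon)$ to be either $\ast$ or equal to it, whence $H^u_i(\varepsilon) = H^u_a(\varepsilon)$; the $\varepsilon(x)=1$ branch is symmetric. Since the prefix alternation makes ``$x <_Q u$'' and ``$x >_Q u$'' exhaustive for an existential pivot $x$ and a universal $u$, every $u \in vars(C)$ is handled and the invariant propagates to line $i$, closing the induction. I expect this consistency step to be the delicate point, since it is precisely where the soundness argument must replace the isomorphism-based reasoning of \mergeres with the weaker $\simeq$ relation.
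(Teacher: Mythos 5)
Your proposal is correct and takes essentially the same route as the paper: an induction over the lines with the invariant that every complete existential assignment falsifying $C_i$, extended by the responses $\{H^u_i(\varepsilon)\}$, falsifies some axiom clause of $\phi$, with the case split on $\varepsilon(x)$ and the observation that consistency of the two strategies prevents the union operation $\circ$ from overwriting a non-$\ast$ value. The paper phrases the invariant as ``the restriction of $\phi$ by $\alpha \cup H_i(\alpha)$ contains the empty clause,'' which is exactly your conditions (a) and (b), and its inductive step establishes $H_a(\alpha) \subseteq H_i(\alpha)$ by the same consistency argument you give.
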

\begin{proof}
Given $\pi := L_1, . . . , L_m$ be an $\mathcal{P}$-refutation of a QBF $\Phi= Q.\phi$. Let each $L_i = (C_i , \{ H^u_i : u \in U \})$ and $X,U$ are sets of all existential and universal variables in $Q$ respectively. Further, for each $i \in [m]$,

\begin{itemize}
    \item [$\bullet$] let $\alpha_i := \{\overline{l}: l \in C_i \}$ be the smallest assignment falsifying Ci ,
    \item [$\bullet$] let $A_i := \{ \alpha \in  \langle X \rangle : C_i \cap \alpha = \emptyset \}$ be all complete assignments to $X$ consistent with $\alpha_i$,
    %\item [$\bullet$] for each $u \in U$, let $h^u_i$ be the function computed by $M^u_i$ ,
    \item [$\bullet$] for each $\alpha \in A_i$, let $l^u_i (\alpha) := H^u_i(\alpha)$ and $H_i(\alpha) := \{l^u_i (\alpha): u \in U \} \setminus \{\ast\}$.
\end{itemize}

{\bf Induction statement}:%\label{sound_induction}
By induction on $i \in [m]$, we show, for each $\alpha \in A_i$, that the restriction of $\phi$ by $\alpha \cup H_i(\alpha)$ contains the empty clause.

{\bf Proof:}
For the base case $i = 1$, let $\alpha \in A_1$. As $L_1$ is introduced as an axiom, there exists a clause $C \in \phi$ such that $C_1$ is the existential subclause of $C$, and each $H^u_1$ is the function outputting the falsifying $u$-literal for $C$. Hence, for each $u \in U$, $l^u_1 (\alpha)$ is the falsifying $u$-literal for $C$, so $C[\alpha \cup H_1(\alpha)] = \emptyset$.

For the inductive step, let $i \geq 2$ and let $\alpha \in A_i$. The case where $L_i$ is introduced as an axiom is identical to the base case, so we assume that $L_i$ was derived by resolution. Then there exist integers $a, b < i$ and an existential pivot $x \in X$ such that $C_i = res(C_a,C_b, x)$
\begin{itemize}
    \item [(1)] Suppose that $\overline{x} \in \alpha$, each $u \in U$ has to satisfy either:\hspace*{1cm}
        \begin{itemize}
        \item [(i)] $x <_Q u$ and $H^u_i= H^u_b \overset{x}{\bowtie} H^u_a$: In which case, $l^u_i (\alpha) = l^u_a (\alpha)$.
        \item [(ii)] $x >_Q u$ and $H^u_i= H^u_a \circ H^u_b$: In which case, $l^u_i (\alpha) = \{ l^u_a (\alpha) \cup l^u_b (\alpha) \}$.\\
    \end{itemize}
It follows that $l^u_i \neq l^u_a$ only if $l^u_a = \ast$, and hence $H_a(\alpha) \subseteq H_i(\alpha)$. Since $C_a \setminus \{x\} \subseteq C_i$, we have $\alpha \in A_a$, so the restriction of $\phi$ by $\alpha \cup H_i(\alpha)$ contains the empty clause by the inductive hypothesis that $\alpha \cup H_a(\alpha)$ contains the empty clause.\\
    \item [(2)] Suppose that $x \in \alpha$. A similar argument shows that $H_b(\alpha) \subseteq H_i(\alpha)$.
\end{itemize}
\qed
Since $\alpha_m$ is the empty assignment, we have $A_m = \langle X \rangle$ (i.e all complete assignments to $X$). We therefore prove the lemma at the final step $i = m$, as we show that $\{H^u_m: u \in U\}$ is a countermodel for $\Phi$.
\qed
\end{proof}
\subsection{Completeness}
One would notice that a major change of \mergeresR from \mergeres is the usage of `consistency' check instead of `isomorphism' check. Note that the relation between them is as such: isomorphism $\Rightarrow$ consistency but not vice-versa. We use this in our proofs for completeness of \mergeresR.

We show the completeness by first showing that \mergeresM p-simulates \mergeres (Theorem \ref{theorem_3}). Here, \mergeresM is a proof system in \mergeresR which uses merge maps as the representation. Further, we show in Claim \ref{claim_4} that a \mergeresM-proof can be transformed into any \mergeresR-proof in exponential time. Nevertheless, completeness is guaranteed as \mergeres is complete and any \mergeres-proof can be transformed into a \mergeresM-proof which in-turn can be transformed as any \mergeresR-proof.

We will need the following remark from the paper introducing \mergeres \cite{mres_paper}.
\begin{remark}\label{remark-same-function}\cite[Proposition 10]{mres_paper}
%In the paper introducing \mergeres \cite{mres_paper}, it states the following in proposition 10: 
Any two isomorphic merge maps compute the same function.
\end{remark}
\begin{theorem}\label{theorem_3}
\mergeresM p-simulates \mergeres.
\end{theorem}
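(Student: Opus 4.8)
The plan is to turn a given \mergeres-refutation $\Pi = L_1,\dots,L_m$ of $Q.\phi$ into a \mergeresM-refutation $\Pi'$ with exactly the same underlying clause-derivation. Since \mergeresM lives in \mergeresR, once the DAG of existential resolution steps and the axioms are fixed, the strategy attached to each universal variable $u$ at each line is \emph{forced} by the \mergeresM rules: it is $\widehat{H}^u_i = \widehat{H}^u_b \overset{x}{\bowtie} \widehat{H}^u_a$ when $x <_Q u$ and $\widehat{H}^u_i = \widehat{H}^u_a \circ \widehat{H}^u_b$ when $x >_Q u$. So $\Pi'$ is completely determined by $\Pi$; keeping the clause parts identical makes $C_m = \bot$ automatically and gives $|\Pi'| = |\Pi|$. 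What remains is to check that every line of $\Pi'$ is a legal \mergeresM line; the only rule that can fail is an `$\circ$' step, which requires its two input strategies to be consistent.

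First I would dispose of the easy matchings by a case split on the pivot $x$ against $u$. When $x >_Q u$, \mergeres is forced to use $\mathrm{select}(M^u_a,M^u_b)$ (the merge option needs $x<_Q u$); validity of select means $M^u_a,M^u_b$ are either isomorphic, hence compute the same function $f$ by Remark~\ref{remark-same-function}, or one of them is the trivial map $\ast$. Both situations make the two maps consistent (isomorphism $\Rightarrow$ consistency, and $\ast$ is consistent with everything), so the `$\circ$' step is legal; moreover $f \circ f = f$ and $\ast \circ M = M$ show that `$\circ$' reproduces exactly the function that select outputs. When $x <_Q u$ and \mergeres uses $\mathrm{merge}$, the merge map is by definition the if-else $M^u_b \overset{x}{\bowtie} M^u_a$, which is literally the \mergeresM rule; and when $x<_Q u$ and \mergeres uses select on two isomorphic maps computing $f$, the if-else $f \overset{x}{\bowtie} f = f$ agrees again. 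In all these cases $\Pi'$ can keep the very same merge maps as $\Pi$.

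The main obstacle is the single remaining case: $x <_Q u$ together with a select applied to a \emph{trivial} hypothesis, say $M^u_a = \ast$, so that \mergeres outputs $M^u_i = M^u_b$ while the forced \mergeresM strategy is $M^u_b \overset{x}{\bowtie}\ast$, which returns $\ast$ on the branch $\varepsilon(x)=0$ and hence differs from $M^u_b$. Here $\Pi'$ genuinely carries a different (more $\ast$-filled) strategy, and this discrepancy propagates, so I cannot argue line-by-line equality of the strategies. To control it I would prove, by induction on $i$, the invariant that each \mergeresM strategy refines the corresponding \mergeres map: whenever $\widehat{H}^u_i(\varepsilon) \neq \ast$ we have $\widehat{H}^u_i(\varepsilon) = M^u_i(\varepsilon)$, written $\widehat{H}^u_i \sqsubseteq M^u_i$. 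This invariant is preserved through every rule, since the if-else and union propagate refinement branch-by-branch and the extra $\ast$ on the $\varepsilon(x)=0$ branch in the problematic case trivially refines $M^u_i$. Crucially, the invariant is exactly what certifies the `$\circ$' steps of $\Pi'$: at such a step ($x>_Q u$) \mergeres used a valid select, so $M^u_a,M^u_b$ are isomorphic (both $\sqsubseteq$ a common $f$) or one is $\ast$; since $\widehat{H}^u_a \sqsubseteq M^u_a$ and $\widehat{H}^u_b \sqsubseteq M^u_b$, the two \mergeresM strategies are then pairwise consistent, making `$\circ$' legal. Having verified every step, $\Pi'$ is a valid \mergeresM-refutation of $Q.\phi$ of size $|\Pi|$; soundness of \mergeresM (Lemma~\ref{lemma_sound}) guarantees that its conclusion is a countermodel, and because the merge-map representations are built by the same operations they do not blow up, which completes the p-simulation.
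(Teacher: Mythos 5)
Your construction is the same as the paper's: keep the clause part of the \mergeres refutation verbatim, attach the strategies forced by the \mergeresM rules, and verify legality by the same three-way case split (select with $x >_Q u$ becomes `$\circ$', merge becomes `$\bowtie$', select with $x <_Q u$ becomes `$\bowtie$'). Where you go beyond the paper is in handling the propagation problem of the last case: the paper merely remarks (with Example~\ref{eg_completeness}) that replacing a select by an if-else only ``dilutes'' the strategy and hence ``doesn't affect future consistency checks'', whereas you formalize this as an inductive invariant $\widehat{H}^u_i \sqsubseteq M^u_i$ (each \mergeresM strategy refines the corresponding \mergeres merge map) and show that this refinement, together with Remark~\ref{remark-same-function}, is exactly what certifies consistency at every later `$\circ$' step. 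This buys rigor at precisely the soft spot of the paper's argument: since the constructed strategies can genuinely differ from the original merge maps and the discrepancy propagates, a line-by-line equality argument is unavailable, and your invariant is the clean way to close that gap. One point both you and the paper leave implicit is that the union $\widehat{H}^u_a \circ \widehat{H}^u_b$ must itself be written down as a polynomial-size merge map (this is easy, e.g.\ by redirecting the $\ast$-leaves of one map into the root of the other, but it is not one of the native merge-map operations); your closing remark that the representations ``do not blow up'' deserves that one-line justification, though the paper's proof is no more explicit on this point.
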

\begin{proof}
Given a QBF $\Phi$ and its \mergeres-proof $\pi=L_1,...,L_m$, where every line $L_i=\{ C_i, \{ M^u_i :u \in U \}\}$. We intend to build an \mergeresM-proof $\Pi=L'_1,...,L'_m$ for $\Phi$, where each $L'_i=\{ C'_i, \{ H^u_i :u \in U \}\}$.

For every line $L_i$ in $\pi$ starting from $i=1$ to $m$, if $L_i$ is an axiom step then directly $C'_i=C_i$ and $H^u_i=M^u_i$ for all $u \in U$. Otherwise, if $L_i$ is an resolution step i.e for some $a,b < i$, $C_i=res(C_a,C_b,x)$; then set $C'_i=C_i$ and for each $u \in U$ if $x <_Q u$ then set $H^u_i= H^u_b \overset{x}{\bowtie} H^u_a$ else set $H^u_i= H^u_a \circ H^u_b$. We see that these are sound steps as resolution in \mergeres can be of the following types:
\begin{itemize}
    \item [(i)] $x >_Q u$ and $M_i = $ select($M^u_a,M^u_b$) ; in this case we set $H^u_i= H^u_a \circ H^u_b$ which holds given the Remark~\ref{remark-same-function} and that isomorphism $\Rightarrow$ consistency.
    \item [(ii)] $x <_Q u$ and $M_i = $ merge($M^u_a,M^u_b,i,x$) ; in this case we set $H^u_i= H^u_b \overset{x}{\bowtie} H^u_a$ which is same as the merge function of \mergeres.
    \item [(iii)] $x <_Q u$ and $M_i = $ select($M^u_a,M^u_b$) ; in this case we set $H^u_i= H^u_b \overset{x}{\bowtie} H^u_a$ which is allowed as \mergeres did the isomorphism test on $M^u_a \text{and} M^u_b$, but we need no such check for $\bowtie$ in \mergeresR. (ref. the note just after Definition~\ref{if_else_def}).
\end{itemize}
In case-(iii) above it remains to note that adding a $\bowtie$ to two isomorphic maps or when one of them is $\ast$, doesn't add any new strategy: it just dilutes the strategy represented by the corresponding merge map. That is, we are adding an if-else condition where both the outcomes are same or one of them is $\ast$.
Hence doesn't affect future consistency checks which may arise in the proof. 
(For further clarity, one is suggested to look at Example \ref{eg_completeness} but it is not needed for the proof).

\noindent 
Therefore, one can clearly see that the proof $\Pi$ constructed in this process is a valid \mergeresM-proof for $\Phi$. Hence this proves the above theorem. \qed
\end{proof}

\begin{claim}\label{claim_4}
Every \mergeresM-proof can be transformed into an \mergeresR-proof for any representation $R$ in exponential time.
\end{claim}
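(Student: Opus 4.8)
The plan is to show that any \mergeresM-proof $\Pi = L'_1, \dots, L'_m$, with lines $L'_i = (C'_i, \{H^u_i : u \in U\})$ where the strategies $H^u_i$ are represented as merge maps, can be rewritten line-by-line into a $\mathcal{P}$-proof for an arbitrary complete representation $R$, keeping the clause parts $C'_i$ untouched and replacing only the representation of each strategy. The essential point is that the clauses and the resolution/axiom structure of a \mergeresR-proof depend only on the \emph{functions} computed by the strategies, not on how those functions are encoded; so I first extract, for each line $i$ and each $u \in U$, the actual decision function $f^u_i : \langle L_Q(u) \rangle \to \{0,1,\ast\}$ computed by the merge map $H^u_i$, and then invoke the defining property of a complete representation $R$ (every finite function has at least one representation in $R$) to pick some $R$-object $\widehat{H}^u_i$ computing exactly $f^u_i$.

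The key steps, in order, are as follows. First I would process lines in order $i = 1, \dots, m$, producing the new line $\widehat{L}_i = (C'_i, \{\widehat{H}^u_i : u \in U\})$. For an axiom line the function $f^u_i$ is the constant map to the falsifying $u$-literal of the clause $C$, which is trivially a finite function and hence has an $R$-representation; so the axiom step remains valid. For a resolution line $L'_i = (C'_i, \dots)$ with pivot $x$ and premises $a,b$, the \mergeresM-proof has set $H^u_i = H^u_b \overset{x}{\bowtie} H^u_a$ (when $x <_Q u$) or $H^u_i = H^u_a \circ H^u_b$ (when $x >_Q u$). I would verify that the function $f^u_i$ I extracted from the merge map $H^u_i$ equals the corresponding if-else or union \emph{function} applied to $f^u_a$ and $f^u_b$: for the $\bowtie$ case this is immediate from the definitions, and for the $\circ$ case it requires that $f^u_a \simeq f^u_b$ holds as functions, which is guaranteed because the underlying \mergeres/\mergeresM step already certified it (isomorphism $\Rightarrow$ consistency, Remark~\ref{remark-same-function}). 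Having matched the functions, I pick $\widehat{H}^u_i$ to be any $R$-representation of $f^u_i$; then the triple $(\widehat{L}_a, \widehat{L}_b, \widehat{L}_i)$ satisfies the \mergeresR resolution rule for representation $R$, since that rule only checks consistency of the premise \emph{functions} and equality of the resulting \emph{function} with the prescribed operation, all of which are representation-independent.

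Finally I would observe that $C'_m = \bot$ is preserved (the clause parts are copied verbatim), so the constructed sequence is a valid $\mathcal{P}$-refutation, and invoke completeness of \mergeres together with Theorem~\ref{theorem_3} to conclude that $\mathcal{P}$ refutes every false QBF. The main obstacle, and the reason the transformation is only claimed in \emph{exponential} time rather than polynomially, is the extraction of the explicit function $f^u_i$ from a merge map and the subsequent choice of an $R$-representation: a merge map compactly encodes a function whose truth table over $\langle L_Q(u) \rangle$ can be exponentially large, and an arbitrary complete representation $R$ gives no efficiency guarantee for producing a representation of a given finite function. I would therefore present the construction purely as an existence/correctness argument — emphasizing that each $f^u_i$ is a genuine finite function, that completeness of $R$ supplies \emph{some} representation of it, and that the consistency and union operations are well-defined on functions regardless of encoding — and explicitly note that reading off $f^u_i$ and building $\widehat{H}^u_i$ may take time exponential in the number of variables in $L_Q(u)$, which accounts for the exponential bound in the statement.
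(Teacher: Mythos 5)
Your proposal is correct and follows essentially the same route as the paper's own proof: keep the clause parts verbatim, extract the decision functions computed by the merge maps, and invoke completeness of $R$ to (non-efficiently) obtain $R$-representations of those functions, noting that the \mergeresR rules depend only on the functions and not on their encoding. Your write-up is in fact more careful than the paper's, since you explicitly verify that the axiom and resolution (if-else and union) steps remain valid after re-encoding, a point the paper leaves implicit.
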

\begin{proof}
Given a QBF $\Phi$ and its \mergeresM-proof $\pi=L_1,...,L_m$, where every line $L_i=\{ C_i, \{ M^u_i :u \in U \}\}$. We intend to build an \mergeresR-proof $\Pi=L'_1,...,L'_m$ for $\Phi$, where each $L'_i=\{ C'_i, \{ H^u_i :u \in U \}\}$.

For every line $L_i$ in $\pi$, we keep the clause part intact while we convert the merge maps into plain functions. Further as $R$ is a complete representation, these functions should have a corresponding representation in $R$; we extensively search for the same. This search terminates at some point owing to $R$ being a complete representation. (This is the place where we used the property that $R$ is a complete representation).
%and then re-build these as represented in $R$. 
The result is an \mergeresR-proof for $\Phi$.
This process is not polynomial in time but regardless still proves completeness for the family of proof systems \mergeresR. \qed
\end{proof}

Let us consider example~\ref{eg_completeness} (below) which was referred in Theorem~\ref{theorem_3}. This example considers the situation corresponding to the case-(iii) of Theorem~\ref{theorem_3}. That is, two isomorphic merge maps can be combined with an if-else and the resulting strategy will still output the same as input merge maps. Or when one of the input merge map being $\ast$, makes the resulting strategy diluted in the sense that for half the assignments it gives a $\ast$ and for others the same as the non-trivial input merge map.
\begin{example}\label{eg_completeness}
Let $M^u_1 = M^u_2 = 1$ be leaf nodes in \mergeres proof. It implies that corresponding $H^u_1 = 1$ and $H^u_2 =1$ in \mergeresR proof. Now say \mergeres performs a resolution on pivot variable $x$ which is to the left of $u$, resulting in $M^u_3= select(M^u_1,M^u_2)$. Whereas the corresponding \mergeresR rule needs to be a $H^u_3= H^u_1 \overset{x}{\bowtie} H^u_2$ from case(iii) (ref.~Theorem \ref{theorem_3}). That is, $H^u_3$ in function form would be defined as follows:\\
\hspace*{2.5cm}\(
H^u_3 = \left\{ 
\begin{array}{l l}
  1 & :\quad x\\
  1 & :\quad \overline{x}\\
\end{array} \right.
\)\\
Notice how this is just a diluted way of writing the strategy $H^u_3 =1$. Hence when in the next line of \mergeres if a $M^u_4 =1$ which is isomorphic to $M^u_3$ is encountered; the corresponding $H^u_4 =1$ in \mergeresR will still remain to be consistent with $H^u_3$ (though they might seem to be structurally different).

\noindent 
In the same example if $M^u_2 =*$ (i.e. trivial), the strategy $H^u_3$ would have been:\\
\hspace*{2.5cm}\(
H^u_3 = \left\{ 
\begin{array}{l l}
  1 & :\quad x\\
  * & :\quad \overline{x}\\
\end{array} \right.
\)\\

\noindent 
Notice how this is another way of diluting the strategy and is still consistent with $H^u_4=1$. 
\end{example}

So far, we showed that each proof system in \mergeresR is sound and refutationally complete for false QBFs. 
Next, we present an example of \mergeresR allowing few resolution steps which are not allowed in \mergeres. Such examples may be useful for the separation results between the proof systems in \mergeresR and the \mergeres proof system.
\begin{example}\label{eg:motivation}
Consider any proof system $\mathcal{P}$ in \mergeresR which uses some complete $R$ representation for strategies. The following Table \ref{mresP_proof} is a $\mathcal{P}$-refutation of the false QBF : $\exists x \forall u \exists y~(y \lor x \lor u) \land (y \lor \overline{x}) \land (\overline{y} \lor x) \land (\overline{y} \lor \overline{x} \lor \overline{u})$
\begin{table}
\centering
\addtolength{\tabcolsep}{6pt}
\begin{adjustbox}{width=0.75\textwidth}
\scriptsize
\begin{tabular}{c c c c}
\hline
Line & Rule & $C_i$ & $H^u_i$\\ 
\hline
 $L_1$ & axiom & $\{y,x\}$ & 0\\ 
 $L_2$ & axiom & $\{y,\overline{x}\}$ & *\\
 $L_3$ & $res(L_1,L_2,x)$ & $\{y\}$ & $H^u_2 \overset{x}{\bowtie} H^u_1$\\
 $L_4$ & axiom & $\{\overline{y},x\}$ & *\\
 $L_5$ & axiom & $\{\overline{y},\overline{x}\}$ & 1\\
 $L_6$ & $res(L_4,L_5,x)$ & $\{\overline{y}\}$ & $H^u_5 \overset{x}{\bowtie} H^u_4$\\
 $L_7$ & $res(L_3,L_6,y)$ & $\{\}$ & $H^u_3 \circ H^u_6$\\
 \hline
\end{tabular}
\end{adjustbox}
\vspace{0.15cm}
\caption{$\mathcal{P}$-refutation, where $\mathcal{P} \in$ \mergeresR, of the false QBF in Example~\ref{eg:motivation}}
\label{mresP_proof}
\vspace{-0.75cm}
\end{table}

\noindent The strategies $H^u_3$ and $H^u_6$ in function format are as follows:\\
\hspace*{1cm}\(
H^u_3 = \left\{ 
\begin{array}{l l}
  0 & :\quad x=0\\
  * & :\quad x=1\\
\end{array} \right.
\) \hspace{1.5cm}
\(
H^u_6 = \left\{ 
\begin{array}{l l}
  * & :\quad x=0\\
  1 & :\quad x=1\\
\end{array} \right.
\)\\
One can see that these strategies are consistent (but not isomorphic), hence the resolution of $L_3,L_6$ on $y$ is allowed in the $\mathcal{P}$-refutation. But the analogous resolution would be blocked in \mergeres since the corresponding merge maps $M^u_3,M^u_6$ will not be isomorphic.
\end{example}

\section{\qef simulates \mergeresR}\label{sec:qef-simulates-mresr}
In this section, we show that \qef can efficiently simulate any valid refutation from proof system in \mergeresR. Therefore, the stronger proof systems like \qrat (Quantified Resolution Asymmetric Tautologies)~\cite{qrat_paper} and $G$ (Gentzen/Sequent Calculus)~\cite{g_paper} can also simulate the same. 

However, proof systems in \mergeresR can have arbitrary representations. Simulating the same is a nightmare for even the strongest proof systems. But observe that the rules to construct the strategies in any representation are the same as defined in Section~\ref{sec:def-mergeR}. We capture these rules as a new tree structure $T$. That is, given a valid proof $\pi$ of any proof system $\mathcal{P} \in$ \mergeresR, $\pi = (C_1, R_1), (C_2, R_2), \dots, (C_m, R_m)$, we construct a proof $\pi' := (C_1, T_1), (C_2, T_2),\dots , (C_m, T_m)$ in such a way that the representation $T_i$ captures the rules that have been used to construct the strategy $R_i$.

We show that the representation $T$ is also a complete representation for finite functions. Therefore, \mergeresT also belongs to the family \mergeresR. Further, we show that any valid $\mathcal{P}$-proof $\pi$ (where $\mathcal{P} \in$ \mergeresR) can be efficiently converted to a \mergeresT-proof $\pi'$.  
Finally using the ideas from~\cite{Chew2021}, we show that \qef can efficiently simulate \mergeresT. This shows that \qef can simulate any proof systems in \mergeresR. We now proceed and define the proof system \mergeresT.

\subsection{\mergeresT proof system}\label{sec:t-representation}
%For the simulation result a representation for \mergeresR is needed. 
Given a false QBF $\mathcal{F}$, a \mergeresT proof $\pi$ of $\mathcal{F}$ is a sequence of lines 
$$(C_1,T_1), (C_2, T_2), \dots, (C_m, T_m)$$
where $C_m = \bot$ and each $T_i$ is constructed as follows:
if $(C_i,T_i)$ is an axiom step, then $T_i$ is constructed as in the \mergeres proof system. Otherwise if $(C_i,T_i)$ is a Resolution step on a pivot left of the universal variable in question (i.e if-else step (`$\bowtie$') of \mergeresR), then $T_i$ is constructed, as a merge node is constructed in \mergeres. Further, if $(C_i,T_i)$ is constructed from a resolution step on $(C_j,T_j)$ and $(C_k,T_k)$ with pivot being right of the universal variable in question and both $T_j$ and $T_k$ are consistent (i.e union step (`$\circ$') of \mergeresR), then $T_i$ is constructed by adding a new type of node called the \# node (defined below) with inputs $T_j$ and $T_k$. 

The \# node is defined assuming both its inputs are consistent, and it outputs the result of a union operation on them; more clearly, it's truth table is shown in the Fig.~\ref{fig:subfig1}.
%We now define a merge map style representation in which an axiom step is treated the same way and an if-else(`$\bowtie$') is handled as a merge operation; whereas a union (`$\circ$') is handled by adding a new type of node called the \# node which is defined as follows.
\begin{figure}[h]
\centering
%\begin{minipage}{0.45\textwidth}
\begin{subfigure}[h]{0.45\textwidth}
\centering
{\renewcommand{\arraystretch}{1.2}
\begin{tabular}{|c|c||c|}
       \hline
A & B & A \# B\\
\hline
1 & 1 & 1\\
0 & 0 & 0\\
$\ast$ & 0/1 & 0/1\\
0/1 & $\ast$ & 0/1\\
$\ast$ & $\ast$ & $\ast$\\
\hline
\end{tabular}}
\caption{Truth table for \# operator.\\(It assumes inputs to be consistent.)}
\label{fig:subfig1}
\end{subfigure}
%\end{minipage}
%\hspace{2.2cm}
\hfill
%\begin{minipage}{0.45\textwidth}
\begin{subfigure}[h]{0.45\textwidth}
\centering
%counter example

\begin{tikzpicture}[scale=0.6, transform shape]

\node[diamond,draw,minimum width = 0.75cm, minimum height = 0.75cm] (a) at (5,-0.5) {{$\boldsymbol{\bowtie}$}};
\node[rectangle,draw,minimum width = 0.75cm, minimum height = 0.75cm] (b) at (3.75,1.5) {{\textbf{\#}}};
\node[rectangle,draw,minimum width = 0.75cm, minimum height = 0.75cm] (c) at (6.25,1.5) {{$\boldsymbol{\#}$}};
\node[diamond,draw,minimum width = 0.75cm, minimum height = 0.75cm] (d) at (2.45,3) {{$\boldsymbol{\bowtie}$}};
\node[diamond,draw,minimum width = 0.75cm, minimum height = 0.75cm] (e) at (4,4.25) {{$\boldsymbol{\bowtie}$}};
\node[ellipse,draw,minimum width = 0.75cm, minimum height = 0.75cm] (f) at (4.85,2.8) {{$\boldsymbol{\ast}$}};
\node[diamond,draw,minimum width = 0.75cm, minimum height = 0.75cm] (g) at (6.65,3.5) {{$\boldsymbol{\bowtie}$}};
\node[ellipse,draw,minimum width = 0.75cm, minimum height = 0.75cm] (h) at (0.75,4.5) {{$\boldsymbol{\ast}$}};
\node[ellipse,draw,minimum width = 0.75cm, minimum height = 0.75cm] (i) at (2.45,5) {{1}};
\node[ellipse,draw,minimum width = 0.75cm, minimum height = 0.75cm] (j) at (3.35,6.5) {{$\boldsymbol{\ast}$}};
\node[ellipse,draw,minimum width = 0.75cm, minimum height = 0.75cm] (k) at (4.65,6.5) {{1}};
\node[ellipse,draw,minimum width = 0.75cm, minimum height = 0.75cm] (l) at (5.85,5.5) {{1}};
\node[ellipse,draw,minimum width = 0.75cm, minimum height = 0.75cm] (m) at (7.65,5.5) {{0}};
%\node[rectangle,draw,label={[label distance=-0.1cm]left:$C_1$}] (h) at (0.2,5.2) {{\large$\overline{e_6}^{u_3 u_5}$}};
%\node[ellipse,draw,label={[label distance=-0.1cm]left:$C_3$}] (i) at (8.1,5.2) {{\large${e_6}^{u_5} \lor e_4$}};

%\node[ellipse,draw,label=above:{${C_1}^{''}$},minimum width = 2.5cm, minimum height = 0.75cm] (m) at (0,6.4) {{\large$\overline{u_3} \lor \overline{u_5} \lor \overline{e_6}$}};

%\node[rectangle,draw,label=above:{${C_2}^{''}$},minimum width = 3cm, minimum height = 0.75cm] (q) at (4,6.4) {{\large$e_6 \lor \overline{u_3} \lor \overline{u_5} \lor \overline{u_1} \lor e_2 $}};
\draw[black, ->] (a) -- (b)node[pos=.35,above, left] {$y$};
\draw[black, ->] (a) -- (c)node[pos=.35,below, right] {$\overline{y}$};
\draw[black, -] (b) -- (d);
\draw[black, -] (b) -- (e);
\draw[black, -] (c) -- (f);
\draw[black, -] (c) -- (g);
\draw[black, ->] (d) -- (h)node[pos=.3,below, left] {$\overline{x}$};
\draw[black, ->] (d) -- (i)node[pos=.3,above, right] {$x$};
\draw[black, ->] (e) -- (j)node[pos=.35,above, left] {$x$};
\draw[black, ->] (e) -- (k)node[pos=.35,above, right] {$\overline{x}$};
\draw[black, ->] (g) -- (l)node[pos=.35,above, left] {$x$};
\draw[black, ->] (g) -- (m)node[pos=.35,above, right] {$\overline{x}$};
\draw[black,dashed] ([xshift=0.75cm,yshift=-0.35cm]b.south) -- ([xshift=-0.3cm,yshift=-0.35cm]b.south)node[pos=.8,above, left] {$T_{7}$};
\draw[black,dashed] ([xshift=-0.75cm,yshift=-0.35cm]c.south) -- ([xshift=0.3cm,yshift=-0.35cm]c.south)node[pos=.8,above, right] {$T_{12}$};
\draw[black,dashed] ([xshift=0.75cm,yshift=-0.25cm]d.south) -- ([xshift=-0.2cm,yshift=-0.25cm]d.south)node[pos=1,above, left] {$T_{6}$};
\draw[black,dashed] ([xshift=0.45cm,yshift=-0.35cm]e.south) -- ([xshift=-0.4cm,yshift=-0.35cm]e.south)node[pos=1,above, left] {$T_{3}$};
\draw[black,dashed] ([xshift=1cm,yshift=-0.25cm]h.south) -- ([xshift=0.2cm,yshift=-0.25cm]h.south)node[pos=1,above, left] {$T_{5}$};
\draw[black,dashed] ([xshift=0.45cm,yshift=-0.35cm]i.south) -- ([xshift=-0.4cm,yshift=-0.35cm]i.south)node[pos=1,above, left] {$T_{4}$};
\draw[black,dashed] ([xshift=1.1cm,yshift=-0.25cm]f.south) -- ([xshift=0.25cm,yshift=-0.25cm]f.south)node[pos=1,above, left] {$T_{11}$};
\draw[black,dashed] ([xshift=-0.5cm,yshift=-0.32cm]g.south) -- ([xshift=0.5cm,yshift=-0.32cm]g.south)node[pos=1,above, right] {$T_{10}$};
\draw[black,dashed] ([xshift=0.4cm,yshift=-0.32cm]j.south) -- ([xshift=-0.3cm,yshift=-0.32cm]j.south)node[pos=.8,above, left] {$T_{2}$};
\draw[black,dashed] ([xshift=-0.35cm,yshift=-0.3cm]k.south) -- ([xshift=0.3cm,yshift=-0.3cm]k.south)node[pos=.8,above, right] {$T_{1}$};
\draw[black,dashed] ([xshift=0.7cm,yshift=-0.3cm]l.south) -- ([xshift=-0.15cm,yshift=-0.3cm]l.south)node[pos=1,above, left] {$T_{8}$};
\draw[black,dashed] ([xshift=-0.6cm,yshift=-0.32cm]m.south) -- ([xshift=0.3cm,yshift=-0.32cm]m.south)node[pos=1,above, right] {$T_{9}$};
\comment{
\draw[black, ->] (m) -- (h);
\draw[black,dashed, ->] (h) -- (g)node[pos=.5,above, left] {$u_1$};

\draw[black, ->] (j.300) edge[bend left]  (e.35);
\draw[black, ->] (k.south) edge[bend left]  (d.east);
\draw[black, ->] (g.south) edge[bend right]  (b.west);}
\end{tikzpicture}
\caption{$T^u_{13}$ graph for Example \ref{exp}}
\label{fig:subfig2}
\end{subfigure}
%\end{minipage}
\caption{Truth table of \# operator is shown in Fig.
\ref{fig:subfig1} and its use in \mergeresT depicted by an example QBF in Fig.~\ref{fig:subfig2}}
\label{fig:subfigureExample}
\end{figure}
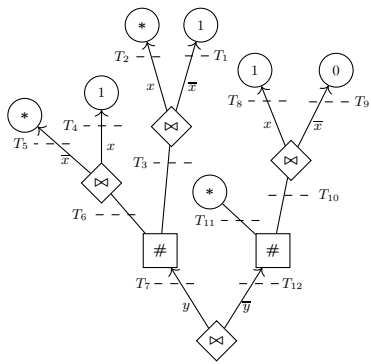
\comment{
\begin{table}[h]
\begin{displaymath}\label{table}
\begin{array}{|c|c||c|}
\hline
a & b & a \# b\\
\hline
1 & 1 & 1\\
0 & 0 & 0\\
* & 0/1 & 0/1\\
0/1 & * & 0/1\\
* & * & *\\
\hline
\end{array}
\caption{Truth table for \# operator. It assumes that both its input are consistent.}
\end{displaymath}
\end{table}}

Note that $A=1,B=0$ and vice-versa cannot happen in a valid \mergeresR proof owing to the definition of union(`$\circ$') which needs the input strategies to be consistent. Therefore, the corresponding rows are omitted from the \# node truth table in Fig~\ref{fig:subfig1}.
Let us illustrate a \mergeresT-proof below for an example QBF. 
\begin{example}\label{exp}
Let $\Phi := \exists x,y, \forall u, \exists a,b ~ (x,\overline{y},\overline{u},a) \land (\overline{x},\overline{y},a) \land (\overline{x},\overline{y},\overline{u},\overline{a}) \land (x, \overline{y},\overline{a}) \land (\overline{x},y,\overline{u},b) \land (x,y,u,b) \land (y,\overline{b})$. The \mergeresT proof of $\Phi$ is shown below in Table \ref{exp_table}:
\begin{table}
\addtolength{\tabcolsep}{6pt}
\begin{adjustbox}{width=1\textwidth}
\scriptsize
\begin{tabular}{c c c c c}
\hline
Line & Rule & $C_i$ & $T^u_i$ & Type of node\\ 
\hline
 $L_1$ & axiom & $\{x,\overline{y},a \}$ & 1 & Leaf\\ 
 $L_2$ & axiom & $\{\overline{x},\overline{y},a \}$ & * & Leaf\\
 $L_3$ & $res(L_1,L_2,x)$ & $\{\overline{y},a \}$ & $T^u_2 \overset{x}{\bowtie} T^u_1$ & if-else\\
 $L_4$ & axiom & $\{\overline{x},\overline{y},\overline{a} \}$ & 1 & Leaf\\
 $L_5$ & axiom & $\{x,\overline{y},\overline{a} \}$ & * & Leaf\\
 $L_6$ & $res(L_5,L_4,x)$ & $\{\overline{y}, \overline{a} \}$ & $T^u_4 \overset{x}{\bowtie} T^u_5$ & if-else\\
 $L_7$ & $res(L_3,L_6,a)$ & $\{\overline{y} \}$ & $T^u_3 \circ T^u_6$ & \#\\
 $L_8$ & axiom & $\{\overline{x},y,b \}$ & 1 & Leaf\\
 $L_9$ & axiom & $\{x,y,b \}$ & 0 & Leaf\\
 $L_{10}$ & $res(L_9,L_8,x)$ & $\{y,b \}$ & $T^u_8 \overset{x}{\bowtie} T^u_9$ & if-else\\
 $L_{11}$ & axiom & $\{y,\overline{b} \}$ & * & Leaf\\
 $L_{12}$ & $res(L_{10},L_{11},b)$ & $\{y \}$ & $T^u_{10} \circ T^u_{11}$ & \#\\
 $L_{13}$ & $res(L_{12},L_7,y)$ & $\{ \}$ & $T^u_7 \overset{y}{\bowtie} T^u_{12}$ & if-else\\
 \hline
\end{tabular}
\end{adjustbox}
\vspace{0.15cm}
\caption{A \mergeresT refutation of the false QBF in Example~\ref{exp}}
\label{exp_table}
\end{table}
\vspace{-0.5cm}
\end{example}

The final $T$-graph of winning strategy for the only universal variable $u$ from Example~\ref{exp} is shown in Figure \ref{fig:subfig2}. One can see that this graph is a hybrid structure of both branching programs and circuits. Since it has both `branching' nodes ($\bowtie$ nodes) and `circuit' nodes (\# nodes). 

Observe that the proposed $T$ representation is complete. That is, any valid finite function can be represented by a $T$ graph. This follows since, merge maps are a subset of $T$-graphs (i.e without \# nodes) which are just branching programs, but known to be complete for all valid functions.
%hence making $T$-graphs also a complete representation.
Since $T$ representations are complete, \mergeresT is a member of \mergeresR proof systems. Therefore this is a sound and complete proof system. Also note that \mergeresT is not claimed to be polynomial time verifiable.

\subsection{Conversion of \mergeresR proofs into \mergeresT proofs}
In this section we show how to convert a valid $ \mathcal{P} $-proof $\pi$ into a valid \mergeresT-proof $\pi'$, where $ \mathcal{P}$ be any proof system in \mergeresR. Let $\pi = (C_1, R_1), (C_2, R_2), \dots, (C_m, R_m)$ be a valid $\mathcal{P}$ proof of a QBF $\mathcal{F}$. 
We show how to convert $\pi$ into a valid \mergeresT-proof $\pi' = (C_1, T_1), (C_2, T_2), \dots, (C_m, T_m)$ of the same QBF $\mathcal{F}$. Note that here $T_i$ is not the representation of $R_i$, but $T_i$ is capturing how $R_i$ has been constructed from some hypothesis $R_j,R_k$ with $j,k <i$ using rules from Section \ref{sec:def-mergeR}. For this we do not need to interpret $R_i$'s, but we can extract the required information from the clauses $C_j, C_k$ and $C_i$ of $\pi$. 

It is also useful to note that, during this conversion, one doesn't need to check if the two strategies $R_j,R_k$ are consistent or not. The conversion is smooth and simple as it assumes $\pi$ to be a valid $\mathcal{P}$-proof of $\mathcal{F}$.
We now proceed to give a detailed method for the same.
\begin{theorem}\label{theorem_8}
Any valid $\mathcal{P}$-proof ($\mathcal{P} \in$ \mergeresR) can be converted efficiently into an \mergeresT proof.
\end{theorem}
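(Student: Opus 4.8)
The plan is to build $\pi'$ line by line, keeping the clause parts untouched and letting each $T_i$ record \emph{how} the strategy at line $i$ was obtained rather than interpreting the opaque representation $R_i$. Concretely, I would process $\pi$ from $i=1$ to $m$ and, at each line, read off its derivation structure: whether it is an axiom or a resolution, and in the latter case the pivot $x$ and the hypotheses $a,b<i$. All of this information is recoverable from the clauses $C_a,C_b,C_i$ together with the quantifier prefix, so no access to the contents of the $R_i$ is ever needed, which is exactly what lets the conversion avoid any consistency test.

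For the construction itself I would set $C_i':=C_i$ for every $i$ and define $T_i$ by the same case split that governs \mergeresR. If $L_i$ is an axiom on $C\in\phi$, let $T^u_i$ be the leaf node holding the falsifying $u$-literal of $C$, exactly as in \mergeres. If $L_i=res(C_a,C_b,x)$ with $x<_Q u$, let $T^u_i:=T^u_b\overset{x}{\bowtie}T^u_a$, a branching $\bowtie$-node built as a merge node is built in \mergeres. And if $x>_Q u$, let $T^u_i:=T^u_a\circ T^u_b$, realised by introducing a \# node with inputs $T^u_a,T^u_b$. By construction each $T_i$ is produced by a legal \mergeresR rule on the $T$-representation, so the only thing left to verify is that the union steps are admissible, i.e.\ that their two inputs are genuinely consistent.

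The key invariant, which I would prove by induction on $i$, is that for every $u\in U$ the node $T^u_i$ computes exactly the same strategy function as $R^u_i$. The base case is immediate, since axiom leaves are identical. For the inductive step, the $\bowtie$-node computes the if-else operation of Definition~\ref{if_else_def} and the \# node computes the union of its (consistent) inputs according to its truth table in Fig.~\ref{fig:subfig1}; these are precisely the operations used to form $R^u_i$ in \mergeresR, so the functions agree by the inductive hypothesis. This invariant is what makes the conversion correct: whenever $\pi$ applies a union step at line $i$ with hypotheses $a,b$, validity of $\pi$ guarantees $R^u_a\simeq R^u_b$, and by the invariant $T^u_a,T^u_b$ compute the same functions, hence $T^u_a\simeq T^u_b$ and the \# node is well-defined, so the omitted rows $A=1,B=0$ and $A=0,B=1$ never occur. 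Thus $\pi'$ satisfies all \mergeresR rules with $T$ as the representation and is a valid \mergeresT-proof, and since $C_m'=C_m=\bot$ it refutes $\mathcal{F}$.

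Efficiency is then clear: the pass over $\pi$ is linear in $m$, each line contributes a single leaf or a single new node ($\bowtie$ or \#) wired to subgraphs already built for the hypotheses, and identifying the rule, pivot, and hypotheses is a purely syntactic lookup on the clauses. The main obstacle is not the mechanics of the translation but establishing the function-equality invariant cleanly enough that the consistency demanded by every \# node is inherited from the validity of $\pi$; once that invariant is in place, both correctness and efficiency follow, and in particular we never have to decide consistency ourselves during the conversion.
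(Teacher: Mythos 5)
Your proposal is correct and follows essentially the same route as the paper's proof: convert line by line using only the clause parts (leaf for axioms, $\bowtie$-node when the pivot is left of $u$, \#-node when it is right), and rely on the validity of the original proof to guarantee that the inputs to every \# node are consistent. Your explicit inductive invariant that $T^u_i$ computes the same function as $R^u_i$ is a slightly more careful statement of what the paper asserts informally (``$T_m$ is a countermodel as it is building the same strategy as in $R_m$''), but it is the same argument.
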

\begin{proof}
For a false QBF $\mathcal{F}$, proofs of
proof systems belonging to \mergeresR can have arbitrary representations for the strategies computed. However, the rules allowed to construct a strategy $R_i$ using any strategies $R_j$ and $R_k$ (where $j,k <i$) are fixed. They must follow the rules mentioned in Section~\ref{sec:def-mergeR}. \mergeresT proof $\pi'$ captures these rules only. 

%$R_1,R_2$ can be any arbitrary representations, but the strategies computed by these two representations for a particular QBF $\mathcal{F}$ should be same as the steps allowed for refutation are same for both. 
To be precise, given a $\mathcal{P}$-proof $\pi$ of $\mathcal{F}$ where $\pi =(C_1,P_1), (C_2,P_2),..., (C_m,P_m)$, we construct \mergeresT-proof $\pi'$ as follows:

From the clause part of the proof $\pi$ i.e $C_1,...,C_m$ (in this sequence) based on what step is being followed (axiom, or resolution where pivot is on left, or resolution where pivot is on right), we build the corresponding $T$-maps as shown in the Figure~\ref{fig:rules}.
%as mentioned in section~\ref{sec:t-representation}.
%The rules of forming the $T$-map %are mentioned in section~\ref{sec:t-representation},we reiterate these rules in Figure~\ref{fig:rules}.
%make graph *******
\begin{figure}[h]
\centering
%\begin{minipage}{0.45\textwidth}
\begin{subfigure}[h]{0.15\textwidth}
\centering
\begin{tikzpicture}[scale=1, transform shape]

\node[ellipse,draw,minimum width = 0.75cm, minimum height = 0.75cm] (a) at (0,0) {{$c_u$}};
\end{tikzpicture}
\caption{Axiom}
\label{fig2:subfig1}
\end{subfigure}
%\end{minipage}
%\hspace{2.2cm}
\hfill
%\begin{minipage}{0.45\textwidth}
\begin{subfigure}[h]{0.4\textwidth}
\centering
\begin{tikzpicture}[scale=0.9, transform shape]

\node[diamond,draw,minimum width = 0.75cm, minimum height = 0.75cm] (a) at (0,0) {{$\boldsymbol{\bowtie}$}};
\node[ellipse,draw,minimum width = 0.75cm, minimum height = 0.75cm] (b) at (-1.5,1.5) {{$T^u_j$}};
\node[ellipse,draw,minimum width = 0.75cm, minimum height = 0.75cm] (c) at (1.5,1.5) {{$T^u_k$}};

\draw[black, ->] (a) -- (b)node[pos=.5,below, left] {$\overline{x}$};
\draw[black, ->] (a) -- (c)node[pos=.5,above, right] {$x$};
\end{tikzpicture}
\caption{if-else node \\($res$ when $x$ is left of $u$ in prefix)}
\label{fig2:subfig2}
\end{subfigure}
\begin{subfigure}[h]{0.4\textwidth}
\centering
\begin{tikzpicture}[scale=0.9, transform shape]

\node[rectangle,draw,minimum width = 0.75cm, minimum height = 0.75cm] (a) at (0,0) {{$\boldsymbol{\#}$}};
\node[ellipse,draw,minimum width = 0.75cm, minimum height = 0.75cm] (b) at (-1.5,1.5) {{$T^u_j$}};
\node[ellipse,draw,minimum width = 0.75cm, minimum height = 0.75cm] (c) at (1.5,1.5) {{$T^u_k$}};

\draw[black, -] (a) -- (b);
\draw[black, -] (a) -- (c);
\end{tikzpicture}
\caption{\# node (i.e union node) \\($res$ when $x$ is right of $u$ in prefix)}
\label{fig2:subfig3}
\end{subfigure}
%\end{minipage}
\caption{Rules to construct $T$-graphs. In Figure \ref{fig2:subfig1}, $c_u$ is the falsifying strategy of u for the axiom clause $C_i$. In Figure \ref{fig2:subfig2}, $C_i=res(C_j,C_k,x)$ and $x$ is left of $u$ in prefix i.e $T^u_i= T^u_k \overset{x}{\bowtie} T^u_j$. In Figure \ref{fig2:subfig3}, $C_i=res(C_j,C_k,x)$ and $x$ is right of $u$ in prefix i.e $T^u_i= T^u_j \circ T^u_k$. Note that the truth table of the `\# gate' is defined in Figure \ref{fig:subfig1}}
\label{fig:rules}
\end{figure}
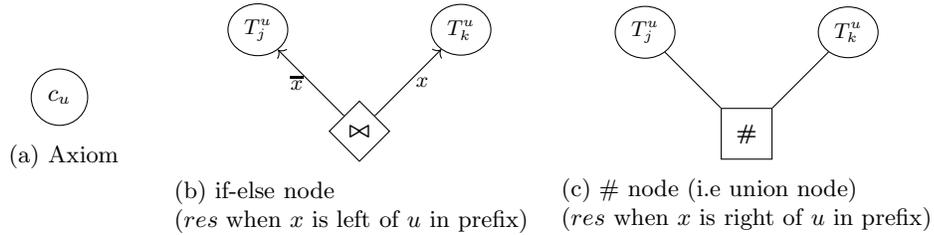

After following this procedure for all lines in $\pi$, the sequence of lines so formed i.e $\pi' = (C_1, T_1), (C_2, T_2), \dots, (C_m, T_m)$ is a valid \mergeresT proof as the clauses $C_1,...,C_m$ are the same as in the original \mergeresR proof hence we know that $C_m$ is definitely $\bot$ and that $T_1,...,T_m$ are built using the same rules as used when building the valid $\mathcal{P}$-proof $\pi$.  Therefore $T_m$ is a countermodel as it is building the same strategy as in $R_m$.
%Observe that, the conversion is not required to check for consistency, since the given proof $\pi$ is assumed to be valid.
%Hence, this conversion is efficient as it relies on the correctness of the $\mathcal{P}$-proof $\pi$.
\qed
\end{proof}

Now we proceed to show that \mergeresT proof system can be efficiently simulated by \qef. However before proving the same, observe that \mergeresT efficiently simulates \mergeres proof system: due to Theorem~\ref{theorem_8}, \mergeresT simulates any \mergeresR proof system, and therefore, it also simulates efficiently the \mergeresM $\in$ \mergeresR proof system, which is known to simulate the \mergeres proof system efficiently (Theorem~\ref{theorem_3}). Thus we have the following:
\begin{proposition}\label{prop:mrest-simulates-mres}
\mergeresT efficiently simulates \mergeres.
\end{proposition}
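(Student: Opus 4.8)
The plan is to obtain the statement purely by composing the two simulation results that immediately precede it, relying on the transitivity of p-simulation. First I would recall that p-simulation is a transitive relation: if a proof system $f$ can translate every $g$-proof into an $f$-proof in time polynomial in the size of the $g$-proof, and $g$ in turn p-simulates $h$, then composing the two translation procedures yields an $f$-proof whose size is polynomial in the size of the original $h$-proof, since a polynomial of a polynomial is again a polynomial. Hence $f \leq_p h$. This transitivity is the only general fact I need beyond the theorems already established.

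Next I would assemble the chain of two links. Theorem~\ref{theorem_3} gives that \mergeresM p-simulates \mergeres. Since merge maps form a complete representation (as noted when \mergeresM was introduced), \mergeresM is a genuine member of the family \mergeresR, so Theorem~\ref{theorem_8} applies to it: every valid \mergeresM-proof, being a valid $\mathcal{P}$-proof for the particular choice $\mathcal{P}=$ \mergeresM $\in$ \mergeresR, can be converted efficiently into a \mergeresT-proof. This is exactly the assertion that \mergeresT efficiently simulates \mergeresM.

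Combining the two links by transitivity yields \mergeresT $\leq_p$ \mergeres, which is the claim. Concretely, given a \mergeres-refutation $\pi$ of a false QBF, I would first apply the construction of Theorem~\ref{theorem_3} to produce a \mergeresM-refutation $\Pi$ whose size is polynomial in $|\pi|$ (indeed that construction is line-preserving, so $|\Pi|=|\pi|$), and then apply the conversion of Theorem~\ref{theorem_8} to $\Pi$ to obtain a \mergeresT-refutation $\pi'$ of size polynomial in $|\Pi|$, hence polynomial in $|\pi|$.

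I expect no genuine obstacle here; the proposition is a corollary of Theorems~\ref{theorem_3} and~\ref{theorem_8} and is essentially bookkeeping. The only points deserving a word of care are verifying that \mergeresM really lies in \mergeresR so that Theorem~\ref{theorem_8} is applicable to it (which holds because merge maps are a complete representation), and confirming that the two polynomial bounds compose into a single polynomial bound (which is immediate). Everything else is a direct appeal to results already proved.
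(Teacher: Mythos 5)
Your proposal is correct and is exactly the paper's argument: the paper also obtains the proposition by applying Theorem~\ref{theorem_8} to the particular system \mergeresM~$\in$ \mergeresR and composing with Theorem~\ref{theorem_3} (\mergeresM p-simulates \mergeres) via transitivity of p-simulation. Your added care about \mergeresM genuinely belonging to \mergeresR and about composing the polynomial bounds matches what the paper leaves implicit.
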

\subsection{\qef simulates \mergeresT}
In this section, we show that \qef efficiently simulates valid \mergeresT refutations. We use the ideas from~\cite[Theorem 1]{Chew2021} which shows how \qef efficiently simulates \mergeres. Let us briefly explain the idea from~\cite[Theorem 1]{Chew2021}: Given an \mergeres-proof $\pi = (C_1, M_1), (C_2, M_2),..., (C_m,M_m)$ of a false QBF $\mathcal{F}$, we know that if $\pi$ is valid then the merge map $M_m$ in the last line gives a winning strategy $S$ for the universal player of $\mathcal{F}$. That is, if we assign values of the universal variables based on $S$, it falsifies $\mathcal{F}$. In~\cite{Chew2021} 
they derived an \qef proof $\pi'$ from $\pi$ efficiently in two phases: in the first phase, they derived $\mathcal{F} \sststile[ss]{\qeff}{} (S \rightarrow \bot)$ using $\pi$. This is equivalent to $\mathcal{F} \sststile[ss]{\qeff}{} \overline{S}$. This first phase was purely propositional. 
Later in the second phase, they used universal reduction to prove $\overline{S} \sststile[ss]{\text{\qef}}{} \bot$. Implying from both phases that $\mathcal{F} \sststile[ss]{\text{\qef}}{} \bot$.
%In~\cite{Chew2021}, it has been proved that \qef simulates \mergeres in two phases: first propositionally 
%proving through induction that when assigning universal variables with the partial strategies at a given line in \mergeres-proof, one can easily derive it's corresponding clause by using valid \qef rules. Secondly through universal reduction, proving that negation of strategies eventually derives $\bot$.
We also use the same tricks for simulating \mergeresT with \qef. Hence, we also simulate the same in two phases. However, in the first phase, they used a double induction in which the second induction depicted how to handle `Select' and `Merge' nodes of \mergeres.
We extend this to \mergeresT by introducing `\# nodes' and giving a method to handle those in the second induction. We now prove this in detail.

%In the second phase, we will have $\mathcal{F} \land S \vdash \bot$ from the induction in phase-1. 
\begin{theorem}\label{theorem:simulation}
\qef efficiently simulates \mergeresT.
\end{theorem}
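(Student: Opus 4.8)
The plan is to follow the two-phase strategy of \cite[Theorem 1]{Chew2021} used for \mergeres and adapt it to the hybrid $T$-representation. Given a valid \mergeresT-refutation $\pi = (C_1, T_1), \dots, (C_m, T_m)$ of a false QBF $\mathcal{F} = Q.\phi$, Lemma~\ref{lemma_sound} (applicable since \mergeresT $\in$ \mergeresR) guarantees that the final $T$-maps $\{T^u_m : u \in U\}$ encode a winning strategy $S$ for the universal player. First I would introduce, for every node occurring in any $T^u_i$, a fresh extension variable whose defining axiom computes that node's output bottom-up; since each line adds only a bounded number of nodes per universal variable and there are polynomially many lines, the total number of extension variables --- and hence the size of the eventual \qef-proof --- stays polynomial in $|\pi|$. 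The strategy $S$ is then read off from the variables at the roots of $T^u_m$. In Phase~1 (purely propositional, carried out in \qeff) I would derive $S \to \bot$ from $\mathcal{F}$, i.e.\ $\overline{S}$; in Phase~2 I would apply $\forall$-reduction, substituting each universal variable by its strategy function, to collapse $\overline{S}$ to $\bot$, obtaining a \qef-refutation of $\mathcal{F}$.

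The heart of Phase~1 is an outer induction on the lines $i = 1, \dots, m$ establishing the invariant that, whenever the existential variables are set so as to falsify $C_i$ and the universal variables follow $T_i$, the matrix $\phi$ is falsified; at $i = m$, where $C_m = \bot$ is vacuously falsified, this yields exactly $S \to \bot$. Within each line the invariant is proved by an inner induction over the node structure of the $T$-maps, as in \cite{Chew2021}. The leaf nodes (constants $0,1,\ast$) and the $\bowtie$ (if-else) nodes are handled precisely as the axiom leaves and \textbf{merge} nodes of \mergeres: a $\bowtie$ node on pivot $x$ is a propositional case split on $x$, matching case (1)(i) and its symmetric counterpart in Lemma~\ref{lemma_sound}. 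The genuinely new case, where the argument departs from \cite{Chew2021}, is the \# (union) node.

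For the \# node I would encode each strategy value by a pair of bits --- a \emph{defined} bit and a \emph{value} bit --- so that $\ast$ is represented by ``undefined'' and $0,1$ by ``defined, with the corresponding value.'' The extension axiom for a \# node with inputs $T^u_j, T^u_k$ then sets its defined bit to the disjunction of the inputs' defined bits and its value bit to the value of whichever input is defined, faithfully realising the truth table of Fig.~\ref{fig:subfig1}. The soundness of this encoding rests on the consistency of the two inputs, which is \emph{assumed} rather than verified: because $\pi$ is a valid \mergeresT-proof, the rows $1\,\#\,0$ and $0\,\#\,1$ never arise, so the union is well defined. In the inductive step I would invoke the semantic facts behind cases (1)(ii) and (2) of Lemma~\ref{lemma_sound} --- that $l^u_i(\alpha) = l^u_j(\alpha)\cup l^u_k(\alpha)$ and that, according to whether $\overline{x}\in\alpha$ or $x\in\alpha$, the relevant $H_j(\alpha)$ or $H_k(\alpha)$ is contained in $H_i(\alpha)$ --- to derive propositionally that the union node inherits the falsifying literal from whichever hypothesis clause ($C_j$ or $C_k$) is active under the current existential assignment. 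Committing every residual $\ast$ at the roots of $T^u_m$ to a fixed value (say $0$) turns $S$ into a total strategy without affecting falsification.

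The main obstacle I expect is precisely this treatment of the \# node in the inner induction. Unlike \mergeres's \textbf{Select} and \textbf{Merge} operations, which are branching-program operations, the \# node is a genuine \emph{circuit} node that additively combines information from both hypotheses; consequently the inductive invariant must be strengthened so that the union strategy provably preserves the falsifying behaviour of \emph{both} branches at once, and this has to be done with only polynomial-size eFrege overhead and without ever invoking a consistency check. Justifying that the omitted truth-table rows may safely be ignored --- i.e.\ that soundness of the resulting \qef-proof follows purely from the assumed validity of $\pi$ --- is the subtle point that binds the simulation together. Once Phase~1 is complete, Phase~2 is routine: I would reduce the universal variables from the rightmost inward, replacing each $u$ by its extension-defined strategy function, which is exactly the $\forall$-reduction rule of \qef, so that every disjunct of $\overline{S}$ in turn collapses and $\bot$ is derived.
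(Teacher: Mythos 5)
Your proposal follows essentially the same route as the paper's proof: the same two-phase simulation in the style of \cite{Chew2021}, extension variables introduced for every node of every $T$-map, an outer induction on the proof lines deriving $\overline{S}$ purely propositionally, an inner induction whose only genuinely new case is the \# node --- handled, exactly as in the paper, by exploiting that validity of $\pi$ rules out the inconsistent truth-table rows, so the union node's root agrees with each hypothesis root except where the latter is $\ast$ (which reduces to the trivial-strategy case) --- followed by a rightmost-inward $\forall$-reduction phase collapsing $\overline{S}$ to $\bot$. Your two-bit (defined/value) encoding of the three-valued node semantics is a concretization that the paper leaves informal, and your invariant is stated contrapositively, but neither changes the structure of the argument.
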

\begin{proof}
{\bf Phase-1}:\\
Given a valid \mergeresT proof $\pi:= (C_1,T_1), (C_2,T_2),..., (C_m,T_m)$ of a false QBF $\mathcal{F}$, 
%\textbf{Definition of extension variables}. 
we create new extension variables for each node in every strategy appearing in the proof. That is, $s^u_{i,t}$ is created for the node $t$ in the strategy $T^u_i$ for the universal variable $u$. 

We define $s^u_{i,t}$ based on whether the corresponding $T^u_i (t)$ is an axiom node, if-else node or \# node as follows:
\[
s^u_{i,t} := \left\{ 
\begin{array}{l l}
  \{1/0/*\} & \quad T^u_{i}(t)=\{1/0/*\}\\
  (y \land s^u_{i,b}) \lor (\overline{y} \land s^u_{i,c}) & \quad T^u_i (t) = T^u_i (b) \overset{y}{\bowtie} T^u_i (c)\\
  s^u_{i,b} \# s^u_{i,c} & \quad T^u_i (t) = T^u_i (b) \circ T^u_i (c)
\end{array} \right.
\]
%As we are naming the nodes in the strategies of universal variable $u$; to maintain the soundness of the proof, $s^u_{i,t}$ should always be defined to the left of $u$ in the prefix.
In the quantifier prefix, we place the newly created variables $s^u_{i,t}$ to the immediate left of $u$ to maintain the soundness of the proof, as strategies for $u$ depends on these variables.

We now prove the outer induction in `Induction \ref{ind_1}' which assumes that `Induction \ref{ind_2}' is valid and hence can derive the clause $C_i$ by assigning local strategies to universal variables through a simple resolution for every line $L_i$.
\begin{induction}\label{ind_1}
Consider the $i^{th}$ line of $\pi$, that is, $(C_i, T_i)$. It is easy for \qef to prove $\bigwedge_{u \in U_i} (u \leftrightarrow s^u_{i,r(u,i)}) \boldsymbol{\rightarrow} C_i$, where $r(u,i)$ is the index of the root node of $T^u_i$. $U_i$ is the subset of $U$ for which $T^u_i$ is non-trivial.
\end{induction}
\textbf{Proof:}\\
\textbf{Base case:} Axiom: Suppose $C_i$ is derived by axiom download of some clause $C \in \mathcal{F}$. If $u$ has a non-trivial strategy, it is because it appears in the clause $C$ and so $u \leftrightarrow s^u_{i,1}$, where $s^u_{i,1} \leftrightarrow c_u$ for $c_u \in \top,\bot$. The constant $c_u$ is correctly chosen to oppose the literal in $C$ so that $C_i$ is just the simplified clause of $C$ replacing all universal $u$ with the corresponding constant $c_u$'s. This is easy for \qef to prove.\\
\comment{
$C=\{x,y,z,\overline{u}\}$, $C_i=\{x,y,z\}$  $\{u \leftrightarrow s^u_{i,1}\}$, $\{s^u_{i,1} \leftrightarrow 1 \}$

$C=\{x,y,z,\overline{u}\}$, $C_i=\{x,y,z\}$  $\{u, \overline{ s^u_{i,1}}\},\{ \overline{x}, s^u_{i,1} \} \{s^u_{i,1}, 0 \} , \{\overline{s^u_{i,1}}, 1 \}$}
\textbf{Inductive step:} Resolution: If $C_j$ is resolved with $C_k$ to get $C_i$ with pivots $\overline{x} \in C_j$ and $x \in C_k$, where $j,k <i$. %we first show (from Ind:\ref{ind_2} below)
From the induction hypothesis, we have $\bigwedge_{u \in U_j} (u \leftrightarrow s^u_{j,r(u,j)}) \boldsymbol{\rightarrow} C_j$ and $\bigwedge_{u \in U_k} (u \leftrightarrow s^u_{k,r(u,k)}) \boldsymbol{\rightarrow} C_k$. %Observe that using these clauses, we cannot change left hand side to $U_i$
Observe that using these clauses, we cannot prove the required statement. However, note that if on the left hand side of theses clauses, one changes the $j$ and $k$ respectively to $i$, then using resolution we can derive $C_i$ on the right hand side. We show in the Induction 2 (below) how to achieve the same. To be precise, from Induction 2 we prove that,
$\bigwedge_{u \in U_i} (u \leftrightarrow s^u_{i,r(u,i)}) \boldsymbol{\rightarrow} C_j$ and $\bigwedge_{u \in U_i} (u \leftrightarrow s^u_{i,r(u,i)}) \boldsymbol{\rightarrow} C_k$ holds.
%Note that left hand side of these equations is the same, so we can 
We then resolve these together to derive $C_i$. This proves Induction \ref{ind_1}.\\
%This induction then proves the Phase-1 at the last line when $C_i=\bot$.\\
\qed

Now in Induction \ref{ind_2} below, we prove what we claimed before in Induction \ref{ind_1} i.e, given $\bigwedge_{u \in U_j} (u \leftrightarrow s^u_{j,r(u,j)}) \boldsymbol{\rightarrow} C_j$, we show $\bigwedge_{u \in U_i} (u \leftrightarrow s^u_{i,r(u,i)}) \boldsymbol{\rightarrow} C_j$ holds. We proceed by handling each $u \in U_i$ one by one as follows:
\begin{induction}\label{ind_2}
$U_i$ is partitioned into $W$ the set of adjusted variables and $V$ the set of variables yet to be adjusted. For every such $V,W$, the following holds:

$(\bigwedge_{v\in V\cap U_j}(v \leftrightarrow s^v_{j,r(v,j)})) \land (\bigwedge_{w\in W}(w \leftrightarrow s^w_{i,r(w,i)})) \boldsymbol{\rightarrow} C_j$\\
Recall that $U_i$ is the subset of $U$ for which $T^u_i$ is non-trivial.
\end{induction}
\textbf{Proof:}\\
\textbf{Base case:} Initially $W$ is empty and as strategies cannot go back to be trivial $U_j \subseteq U_i$. Hence the statement to prove is exactly the statement given above in the hypothesis. Therefore, base case is trivially true.\\
%Hence we get the induction hypothesis to be exactly equal to Induction 2 statement.\\
\textbf{Inductive step:}\\Starting with $(\bigwedge_{v\in V\cap U_j}(v \leftrightarrow s^v_{j,r(v,j)})) \land (\bigwedge_{w\in W}(w \leftrightarrow s^w_{i,r(w,i)})) \boldsymbol{\rightarrow} C_j$.\\
We pick a $u \in V$ to adjust into $i$-terms, i.e we show the following:\\
$(u \leftrightarrow s^u_{i,r(u,i)}) \land (\bigwedge_{v\in \{V\cap U_j\} \setminus \{u\}}(v \leftrightarrow s^v_{j,r(v,j)})) \land (\bigwedge_{w\in W}(w \leftrightarrow s^w_{i,r(w,i)})) \boldsymbol{\rightarrow} C_j$.\\
We have three cases based on the rule used to derive the line $L_i=(C_i,T_i)$:
\begin{itemize}
    \item [i]  $T^u_j=*$
    \item [ii] $T^u_j \neq *$, $T^u_i = T^u_j \overset{x}{\bowtie} T^u_k$
    \item [iii] $T^u_j \neq *$, $T^u_i= T^u_j \circ T^u_k$
\end{itemize}
%$(u \leftrightarrow s^u_{j,r(u,j)})$ to $(u \leftrightarrow s^u_{i,r(u,i)})$
In case (i) we can easily adjust the universal variable $u$. That is, we can simply add the following: $(u \leftrightarrow s^u_{i,r(u,i)})$. This is sound because the clause $(u \leftrightarrow s^u_{j,r(u,j)})$ has never appeared before in the left hand side of the hypothesis but still we were able to derive $C_j$. Therefore, adding $(u \leftrightarrow s^u_{i,r(u,i)})$ to the left hand side of the hypothesis, will still be able to derive $C_j$.\\
%we are simply weakening the implication as $(u \leftrightarrow s^u_{j,r(u,j)})$ never appeared before.\\
In case (ii) we prove inductively that for each node $t$ in $T^u_j, (s^u_{i,t} \leftrightarrow s^u_{j,t})$ holds. This is true for all leaf and intermediate nodes of $T^u_j$ as we are only going to connect two $T$-graphs( i.e $T^u_j, T^u_k$) by an extra if-else node in $T^u_i$, i.e. all nodes of $T^u_j$ are present in $T^u_i$.
Hence eventually at the root of $T^u_j$, we will have $s^u_{i,r(u,j)} \leftrightarrow s^u_{j,r(u,j)}$. 
However to prove the induction statement (Induction \ref{ind_2}), we need to show this relation between roots of $T^u_i$ and $T^u_j$ i.e, $s^u_{i,r(u,i)} \leftrightarrow s^u_{j,r(u,j)}$. For this we use the definition of merging that $x \boldsymbol{\rightarrow} (s^u_{i,r(u,i)} \leftrightarrow s^u_{i,r(u,j)})$ and so we have $(s^u_{i,r(u,i)} \leftrightarrow s^u_{i,r(u,j)}) \lor \overline{x}$. We almost got the relation we needed but only $\overline{x}$ is the extra literal.
%Since $\overline{x}$ is already $\in C_j$, 
%Hence any resolution of this clause with $\mathcal{F}$ will result in $\{C_j \lor \overline{x} \}$, 
But note that $\overline{x}$ is already $\in C_j$. So, the $\overline{x}$ is absorbed by the $C_j$ in right hand side of the implication.\\
In case (iii) using a similar induction as used in case (ii), we can derive $s^u_{i,r(u,j)} \leftrightarrow s^u_{j,r(u,j)}$: because we are not deleting any strategies just adding an \# gate.
By the definition of the $\#$ gate, $s^u_{i,r(u,i)} \neq s^u_{i,r(u,j)}$ only when $s^u_{i,r(u,j)} \leftrightarrow *$, 
in which case it is directly case-(i) above. That is, we can directly add $u \leftrightarrow s^u_{i,r(u,i)}$ to the given hypothesis and we are done. In the other case when $s^u_{i,r(u,i)} = s^u_{i,r(u,j)}$, we can simply add $s^u_{i,r(u,i)} \leftrightarrow s^u_{i,r(u,j)}$ which directly proves the induction step.\vspace{0.3cm}\qed
\noindent \textbf{Phase-2:}\\
%Now that both inductions are proved, to prove the Theorem \ref{theorem:simulation} we need to provide an \qef proof for $\mathcal{F}$.
%For this, we start at the last line of $\pi$ where $C_m=\bot$. Here, from Induction 1, we know that $\bigwedge_{u \in U_m} (u \leftrightarrow s^u_{m,r(u,m)}) \boldsymbol{\rightarrow} \bot$.
%In other words, $\bigvee^n_{i=1} (y_i \bigoplus s^{y_i}_{l,r(y_i,l)}$ holds.
At this point, from the Induction 1, we have derived:
%$\bigwedge_{u \in U_m} (u \leftrightarrow s^u_{m,r(u,m)}) \boldsymbol{\rightarrow} \bot$. That is, 
$$\mathcal{F} \sststile[ss]{\qeff}{} (\bigwedge_{u \in U_m} (u \leftrightarrow s^u_{m,r(u,m)}) \boldsymbol{\rightarrow} \bot)$$
In other words, we have derived the winning strategy (say $S$) for the universal player in the QBF $\mathcal{F}$ i.e $\mathcal{F} \sststile[ss]{\qeff}{} (S \rightarrow \bot)$. Equivalently, $\mathcal{F} \sststile[ss]{\qeff}{} \overline{S}$. 
%That is, $\mathcal{F} \vdash \overline{S}$ where $S$ is the winning strategy for universal player in $\mathcal{F}$.
Also, observe that so far we are only in the propositional world. Using the ideas from~\cite{Chew2021}, now from the negation of the strategies for the universal player (i.e., $\overline{S}$), we can easily derive the empty clause using the universal reduction steps. 

\noindent That is, we have the following $ \overline{S} := \bigvee_{i=1}^n (u_i \oplus  s^{u_i}_{m,r(u_i,m)})$, where $U_m = \{u_1, u_2,\dots, u_n \}$ in this order in the prefix. 
Observe the following property for some $k=1$ to $k=n$ in this order:
$$\bigvee_{i=1}^{n-k+1} (u_i \oplus  s^{u_i}_{m,r(u_i,m)})$$
From the above formula, just pull out the last term and we have:
$$\bigvee_{i=1}^{n-k} (u_i \oplus  s^{u_i}_{m,r(u_i,m)}) \vee (u_{n-k+1} \oplus s^{u_{n-k+1}}_{m,r(u_{n-k+1},m)} )$$
Performing the universal reduction step on $u_{n-k+1}$ is the same as:\\
$$\bigvee_{i=1}^{n-k} (u_i \oplus  s^{u_i}_{m,r(u_i,m)}) \vee (0 \oplus s^{u_{n-k+1}}_{m,r(u_{n-k+1},m)} ) \bigwedge
\bigvee_{i=1}^{n-k} (u_i \oplus  s^{u_i}_{m,r(u_i,m)}) \vee (1 \oplus s^{u_{n-k+1}}_{m,r(u_{n-k+1},m)} )$$
Which is same as:
$$\bigvee_{i=1}^{n-k} (u_i \oplus  s^{u_i}_{m,r(u_i,m)}) \vee (s^{u_{n-k+1}}_{m,r(u_{n-k+1},m)}) \bigwedge
\bigvee_{i=1}^{n-k} (u_i \oplus  s^{u_i}_{m,r(u_i,m)}) \vee (\overline{s^{u_{n-k+1}}_{m,r(u_{n-k+1},m)}})$$
Note that we can perform universal reduction on $u_{n-k+1}$ as the only existential new variable appearing in the clause is to the left of it in the prefix.

We can resolve these two to get the following:\\
$$\bigvee_{i=1}^{n-k} (u_i \oplus  s^{u_i}_{m,r(u_i,m)})$$
Note that we used the following rule above: $0 \oplus x$ is $x$ and $1 \oplus x$ is $\bar{x}$.
We continue reducing all $u_i$'s to derive the $\bot$ at the end. %That is, we have now derived $S \rightarrow \bot$.

The proof of Theorem 9 can be concluded by combining the results of Phase-1 and Phase-2 i.e, $\mathcal{F} \sststile[ss]{\qeff}{} \overline{S}$ and $\overline{S} \sststile[ss]{\text{\qef}}{} \bot$. Implying from both phases that $\mathcal{F} \sststile[ss]{\text{\qef}}{} \bot$. 

\qed
\end{proof}
Note that given a valid \mergeresT proof, by the simulation in Theorem~\ref{theorem:simulation} one can obtain a valid \qef proof. But the validness of the resultant \qef proof cannot be used to determine if the initial \mergeresT proof was valid or not. That is, an invalid \mergeresT proof may result into a valid \qef proof. Therefore as claimed before, these \mergeresR proof systems are not polynomial time verifiable even after being simulated by the powerful \qef proof system.
%That is, this simulation cannot be used to check the validness of \mergeresT proofs (not polynomial time verifiable).
From the above discussions, Theorem~\ref{theorem_8} and Theorem~\ref{theorem:simulation} imply the following:
\begin{theorem}
\qef efficiently simulates valid refutations from proof systems in \mergeresR.
\end{theorem}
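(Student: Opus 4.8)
The plan is to obtain this theorem purely by composing the two efficient conversions already established, namely Theorem~\ref{theorem_8} and Theorem~\ref{theorem:simulation}. Given any proof system $\mathcal{P} \in$ \mergeresR and a valid $\mathcal{P}$-refutation $\pi = (C_1,R_1),\dots,(C_m,R_m)$ of a false QBF $\mathcal{F}$, I would first invoke Theorem~\ref{theorem_8} to convert $\pi$ efficiently into a valid \mergeresT-proof $\pi' = (C_1,T_1),\dots,(C_m,T_m)$ of the same $\mathcal{F}$. I would then invoke Theorem~\ref{theorem:simulation} to convert $\pi'$ efficiently into a \qef proof of $\mathcal{F}$. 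Since the composition of two polynomial-time transformations is again polynomial-time, the resulting \qef proof is produced in time polynomial in $|\pi|$, which is exactly the claimed simulation.

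The key steps, in order, are: first, fix $\mathcal{P}\in$ \mergeresR arbitrarily and take a valid $\mathcal{P}$-refutation $\pi$; second, apply Theorem~\ref{theorem_8} to obtain $\pi'$, noting that this step reads only the clause parts $C_1,\dots,C_m$ and the rule applied at each line, so it runs in polynomial time and outputs a \mergeresT-proof whose number of lines equals $m$ and whose $T$-graphs grow by exactly one node per line, each new node referencing previously built $T$-graphs; third, apply Theorem~\ref{theorem:simulation} to $\pi'$, which itself proceeds in its two phases and outputs a \qef proof of size polynomial in $|\pi'|$; finally, conclude by transitivity of efficient simulation.

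The point that needs the most care is the size bookkeeping across the composition: I must confirm that $|\pi'|$ is polynomially bounded in $|\pi|$ so that the subsequent \qef proof, which is polynomial in $|\pi'|$, remains polynomial in $|\pi|$. This holds because Theorem~\ref{theorem_8} introduces one $T$-node per proof line and the $T$-graphs reuse earlier subgraphs rather than copying them, so the total representation size stays $O(m)$ up to per-node overhead; combined with the polynomial bound of Theorem~\ref{theorem:simulation}, the overall blow-up remains polynomial. I expect this to be the main (albeit mild) obstacle, since the simulation of Theorem~\ref{theorem:simulation} is stated for \mergeresT in general and one must be sure that the particular \mergeresT-proofs produced by Theorem~\ref{theorem_8} fall squarely within its scope.

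Finally, I would remark that this composed simulation inherits the caveat already flagged for the two ingredient results: it is an efficient simulation in the sense that a valid $\mathcal{P}$-refutation is mapped to a valid \qef proof in polynomial time, but since Theorem~\ref{theorem_8} \emph{assumes} $\pi$ to be valid and never checks consistency of the strategies, one cannot use the validity of the output \qef proof to certify the validity of the input $\mathcal{P}$-refutation. Thus the statement concerns efficient simulation of valid refutations only, matching the wording of the theorem.
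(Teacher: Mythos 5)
Your proposal takes essentially the same route as the paper, which derives this theorem precisely by composing Theorem~\ref{theorem_8} with Theorem~\ref{theorem:simulation} (the paper's entire proof is the remark that these two results imply the statement). Your extra size bookkeeping and the caveat that validity of the output \qef proof cannot certify validity of the input $\mathcal{P}$-refutation are both consistent with, and slightly more explicit than, the paper's surrounding discussion.
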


%Now, inductively as explained in Induction 2,  we can derive all predecessors of $C_m$ starting from $C_{m-1},...,C_1$. Once all the $C_i$'s are derived they can simply be resolved together to derive a $\bot$.

\section{Lower Bound for Regular \mergeresR}\label{sec:lower-bound-regular}
In this section, we lift the lower bound of Completion Formulas (\crn, Section \ref{CR_formula}) for Regular-\mergeres~\cite[Theorem 9]{BeyersdorffBMPS20}, to Regular-\mergeresR. We state the \crn formulas once again for ease of reference.

\crn = $\underset{i,j \in [n]}{\exists} x_{ij}$, $\forall z$, $\underset{i \in [n]}{\exists} a_{i}$, $\underset{j \in [n]}{\exists} b_{j}$
$\Big( \underset{i,j \in [n]}{\bigwedge} ( A_{ij} \wedge B_{ij} ) \Big) \wedge L_A \wedge L_B$\\
where,

$A_{ij}= x_{ij} \lor z \lor a_{i}$ \hspace{1.8cm} $B_{ij}= \overline{x_{ij}} \lor \overline{z} \lor b_{j}$

$L_A= \overline{a_1} \lor \dots \lor \overline{a_n}$ \hspace{1.7cm} $L_B= \overline{b_1} \lor \dots \lor \overline{b_n}$ \vspace{0.2cm}

The lower bound follows from a stronger result that we prove below in Theorem \ref{theorem_lb} that any (A $\cup$ B)-regular refutation of \crn in any proof system belonging to \mergeresR must have size $2^{{O}(n)}$. 
%Here, $A \cup B$ contains all the variables to the right of the only universal variable $z$ of \crn. 
We use the ideas from \cite{BeyersdorffBMPS20} to prove the lower bound.
We try to maintain the same notations wherever possible for simplicity. 

Before presenting the lower bound proof in detail, we present the basic idea for the same. The proof setup is depicted in Figure \ref{fig:proof_diag}.
As every clause in \crn has a variable from the set $A \cup B$, but the refutation should derive a $\bot$ at the final line; there must be a `section' of the proof (See shaded region $S'$ in Fig \ref{fig:proof_diag}) which only has $X$ variables in all its clauses. This section also includes the final line. The set of clauses at the `border' (See the bold line $S$ in Fig \ref{fig:proof_diag}) of this section of the proof is shown to be wide (in terms of number of literals) in Lemma \ref{lemma_6}. Using this and the argument that the conjunction of clauses in $S$ itself forms a false CNF formula, we show in Theorem \ref{theorem_lb} that the number of clauses in $S$ is large (exponential in $n$). This directly implies that the size of the \mergeresR-proof is also large.

To establish the width bound, we note that the pivots which are used while deriving clauses in $S$ are variables from $A \cup B$ and that they are all to the right of $z$. Meaning that the corresponding resolutions must all be union steps i.e the incoming strategies must be consistent (not isomorphic as is the case in \mergeres).
This especially makes it difficult to directly lift the lower bound proof of \mergeres from \cite{BeyersdorffBMPS20}.
However we successfully overcome this issue in Claim \ref{claim_7} by arguing how $L_A,L_B$ are the only clauses with trivial strategies and how any other clause which resolves with these will mask this trivial-ness with its own definitive strategy.
Further, by analysing what axiom clauses cannot be used in the derivation of the clauses in $S$, we show that many variables cannot be resolved before these lines. Hence, these variables will still be present in the clause $\in S$, making it wide. We now clearly state and prove the theorem for the lower bound result.

\begin{figure}[h]
\centering
\begin{tikzpicture}%[scale=0.9, transform shape]
                [
point/.style = {circle, draw=black, inner sep=0.07cm,
                ball color=#1,
                node contents={}},
every label/.append style = {font=\footnotesize}
                    ]

\draw[fill=gray!30] (0,0) -- (-2,2) -- (2,2) -- cycle;
\node[] (s) at (0,1.2) {$S'$};
\node[] (a) at (-5,4) {};
\node[] (b) at (5,4) {};
\node[ellipse,draw,minimum width = 0.75cm, minimum height = 0.75cm] (c) at (0,0) {{$\boldsymbol{\bot}$}};
%\node[ellipse,draw,minimum width = 0.75cm, minimum height = 0.75cm] (d) at (-1.5,1.5) {{$C^u_j$}};
%\node[ellipse,draw,minimum width = 0.75cm, minimum height = 0.75cm] (e) at (1.5,1.5) {{$C^u_k$}};
\node[] (d) at (-2,2) {};
\node[] (e) at (2,2) {};

\draw[black, -, shorten <= -2pt, shorten >= -2pt] (a) -- (b)node[pos=1, right] {Axioms};
\draw[black, -, shorten <= -3pt] (a) -- (c);
\draw[black, -, shorten <= -3pt] (b) -- (c);
\draw[black, -, line width=0.5mm, shorten <= -3pt, shorten >= -3pt] (d) -- (e)node[pos=1, below, right] {$S$};
\draw[black, -] (d) -- (c);
\draw[black, -] (e) -- (c);

\node (f) at (0,2) [point=black, label=below:{\scriptsize $L=(C,H^z)$}];
\node (g) at (-1,2.7) [point=black, label=left:{\scriptsize $L_1=(C_1,H^z_1)$}];
\node (h) at (1,2.7) [point=black, label=right:{\scriptsize $L_2=(C_2,H^z_2)$}];

\draw[black, -] (f) -- (g)node[pos=.5, left] {$a_n$};
\draw[black, -] (f) -- (h)node[pos=.5, right] {$\overline{a_n}$};

\draw[black, dashed] (g) -- (-2.5,4) {};
\draw[black, dashed] (g) -- (0.5,4) {};
\node[circle, inner sep=0pt, draw=white, fill=white] (i) at (-1.5,4) {\scriptsize \textbf{$A_{nj}$}};
\node[circle, inner sep=0pt, draw=white, fill=white] (i) at (-0.5,4) {\scriptsize \textbf{$L_A$}};

\node[cross] (j) at (-0.5,4) {};
%\draw[black, -] (a) -- (c);
%\draw[black, -] (d) -- (e);
\end{tikzpicture}
\caption{\textit{Lower bound proof illustration.} Given any $\mathcal{P} \in$ \mergeresR, a \crn formula and it's $\mathcal{P}$-proof $\Pi$, this figure shows the graph $G_\Pi$. Claim \ref{claim_7} illustrates that $x_{ij} \notin var(H^z_2)$ for $i \in [n-1], j \in [n]$ . Claim \ref{claim_8} illustrates that $|vars(C_2)| \geq n-1$. Lemma \ref{lemma_6} shows that $|vars(C)| \geq n-1$. Theorem \ref{theorem_lb} proves that $|S| \geq 2^{n-1} $}
\label{fig:proof_diag}
\vspace{-0.3cm}
\end{figure}
\begin{theorem}\label{theorem_lb}
Every ($A \cup B$)-regular refutation of \crn in any proof system belonging to \mergeresR has size $2^{\Omega(n)}$.
\end{theorem}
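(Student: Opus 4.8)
The plan is to prove the stronger statement by a covering/counting argument on a carefully chosen ``frontier'' of the refutation, following the strategy of \cite[Theorem~9]{BeyersdorffBMPS20} but replacing the isomorphism-based reasoning by consistency-based reasoning. Fix any $\mathcal{P}\in$ \mergeresR and any $(A\cup B)$-regular $\mathcal{P}$-refutation $\Pi=L_1,\dots,L_m$ of \crn. First I would isolate the set $S'$ of all lines $L_i=(C_i,\{H^u_i\})$ whose clause $C_i$ contains no variable of $A\cup B$. Since every axiom of \crn carries a variable of $A\cup B$ (each $A_{ij}$ has $a_i$, each $B_{ij}$ has $b_j$, and $L_A,L_B$ consist entirely of such literals) while the conclusion $C_m=\bot$ is vacuously over $X$, the set $S'$ contains the final line but no axiom. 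I then define the frontier $S$ to be the border of $S'$: those clauses of $S'$ produced by a resolution step whose pivot lies in $A\cup B$. A clause over $X$ derived on an $X$-pivot has both parents over $X$, hence lies strictly below the frontier, so each line of $S'$ is either in $S$ or is the resolvent of two $S'$-lines on an $X$-pivot.

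The first substantive step is that $\bigwedge_{C\in S}C$ is unsatisfiable. This follows from soundness of propositional resolution: the clause parts of $\Pi$ form a resolution DAG deriving $\bot$, and restricting this DAG to the lines of $S'$ yields a resolution derivation of $\bot$ whose source clauses are exactly the frontier clauses $S$ (every interior $S'$-line is the resolvent of two $S'$-lines). Hence no assignment to $X$ satisfies all of $S$; equivalently, every $\alpha\in\langle X\rangle$ falsifies at least one clause of $S$.

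Next comes the width bound, which is the heart of the argument and where the passage from \mergeres to \mergeresR must be handled; the goal is Lemma~\ref{lemma_6}, that every $C\in S$ satisfies $|vars(C)|\ge n-1$. Writing $C=res(C_1,C_2,p)$ with $p\in A\cup B$, say $p=a_n$, the clause $C_2$ containing $\overline{a_n}$ must descend from $L_A$, the unique source of $\overline{a_n}$, and by $(A\cup B)$-regularity its derivation resolves away $a_1,\dots,a_{n-1}$ using $n-1$ axioms $A_{1\,j_1},\dots,A_{n-1\,j_{n-1}}$ of distinct rows, each introducing a fresh existential literal $x_{i\,j_i}$. The key point is that these $x$-literals cannot be resolved away on the branch to $C_2$: eliminating some $x_{ij}$ would require resolving against a $B_{ij}$ axiom, whose $z$-strategy sets $z\mapsto 1$, and unioning it with the $A$-side strategy setting $z\mapsto 0$ is forbidden at a union step since the two are inconsistent. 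In \mergeres this masking/persistence is obtained for free from isomorphism of the $z$-strategies; in \mergeresR only consistency of $H^z_1$ and $H^z_2$ is guaranteed, so I would prove Claim~\ref{claim_7} separately, arguing that $L_A$ and $L_B$ are the only clauses carrying a trivial ($\ast$) $z$-strategy and that any clause resolving against them overwrites the trivial strategy with a definite one, forcing $x_{ij}\notin vars(H^z_2)$ for $i\in[n-1]$, $j\in[n]$. This restores exactly the structural property the isomorphism argument provided and yields $|vars(C_2)|\ge n-1$ (Claim~\ref{claim_8}), hence $|vars(C)|\ge n-1$.

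Finally I would assemble these ingredients by a union bound. There are $2^{n^2}$ complete assignments to $X=\{x_{ij}\}_{i,j\in[n]}$, and each clause $C\in S$ with $|vars(C)|\ge n-1$ is falsified by at most $2^{n^2-(n-1)}$ of them. Since every assignment falsifies some clause of $S$, we obtain $|S|\cdot 2^{n^2-(n-1)}\ge 2^{n^2}$, so $|S|\ge 2^{n-1}$; as $S$ is a subset of the lines of $\Pi$, this gives $|\Pi|\ge 2^{n-1}=2^{\Omega(n)}$. I expect the frontier/unsatisfiability decomposition and the union-bound counting to be routine. The main obstacle is Claim~\ref{claim_7}: recovering, purely from consistency of the strategies together with the combinatorial structure of \crn and $(A\cup B)$-regularity, the variable-absence property that isomorphism supplies automatically in the original \mergeres lower bound.
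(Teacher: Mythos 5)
Your frontier decomposition ($S'$, $S$), the unsatisfiability of $\bigwedge_{C\in S}C$, and the final counting are exactly the paper's skeleton and are fine. The genuine gap is in the width bound, at precisely the step you yourself identify as the heart of the matter. Your justification for persistence of the $x$-literals --- ``eliminating some $x_{ij}$ would require resolving against a $B_{ij}$ axiom \dots forbidden at a union step since the two are inconsistent'' --- mis-types the resolution step: $x_{ij}$ is quantified to the \emph{left} of $z$, so a resolution on pivot $x_{ij}$ is an if-else ($\bowtie$) step, not a union ($\circ$) step, and \mergeresR imposes \emph{no} consistency requirement on $\bowtie$ (this is stated explicitly after Definition~\ref{if_else_def}, and is exactly the freedom Example~\ref{eg:motivation} exploits). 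The real obstruction sits one level deeper: resolving with $B_{ij}$ drags in $b_j$, whose removal needs $L_B$ and then further $B$-clauses, and it is those union steps on $b$-pivots (right of $z$) that fail consistency. Worse, even this corrected argument does not give the persistence you claim: one can first resolve $B_{ij}$ with $L_B$ and other $B$-axioms (all these strategies are $z\mapsto 1$ or $\ast$, hence mutually consistent) to prefabricate a $b$-free clause containing $\overline{x_{ij}}$ together with other negative $x$-literals, and then eliminate $x_{ij}$ from the $A$-side by an unrestricted if-else step. So your specific literals $x_{i\,j_i}$ need not survive into $C_2$; they can be traded for $x$-literals in other rows.

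This trading is why the paper does not argue the width of $C_2$ syntactically at all. Its Claim~\ref{claim_8} is semantic and row-or-column: assuming $C_2$ misses an entire row $u\in[n-1]$ \emph{and} an entire column $v$, it builds assignments $\alpha_0,\alpha_1$ differing only in $x_{uv}$ and derives a contradiction from the soundness invariant of Lemma~\ref{lemma_sound} together with $x_{ij}\notin vars(H^z_2)$ --- a conclusion that is immune to which particular literals got traded. Your sketch of Claim~\ref{claim_7} is also missing its anchor: the masking argument (``$L_A,L_B$ are the only trivial strategies and resolution overwrites trivial ones'') only shows that $H^z_1,H^z_2$ never output $\ast$; by itself it says nothing about which variables $H^z_2$ depends on. The paper first proves, from $(A\cup B)$-regularity, that $L_A$ and hence all row-$[n-1]$ $A$-axioms are absent from $\Pi_{L_1}$, giving $x_{ij}\notin vars(H^z_1)$; only then do consistency at the $a_n$ union step plus the no-$\ast$ property transfer this independence to $H^z_2$. (The corrected $b$-chain inconsistency argument is used in the paper only once, to pin a single literal $x_{nj}$ in $C_1$.) So your counting framework is correct, but the lemma it rests on --- Lemma~\ref{lemma_6} --- cannot be proved by the persistence route you propose; it needs the $H^z_1$-to-$H^z_2$ transfer and the semantic contradiction argument, neither of which your proposal supplies.
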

\begin{proof}
For $\mathcal{P} \in$ \mergeresR, 
let $\Pi$ be a $\mathcal{P}$-refutation of \crn (for $n > 2$).
Let the set of lines $S, S'$ be defined as follows:
\begin{description}
\item[$S'$:] This set consists of all the lines $L = (C,H^z)$ from $\Pi$ such that $vars(C) \cap \{ A \cup B \} = \emptyset$ and there exists a path from $L$ to $\bot$ in $G_\Pi$ consisting of lines only from $S'$. 
\item[$S$:] This set contains all the lines $L \in S'$ such that $L=Res(L_1,L_2,v)$ where $L_1,L_2 \notin S'$. Observe that the pivot variable $v$ must belong to $ \{ A \cup B \}$.\vspace{0.2cm}
\end{description}
%$S' = \{ L=(C,H^z)~|~vars(C) \cap \{ A \cup B \} = \emptyset, L \leadsto \bot \text{ over lines only belonging to } S' \text{ exists in } G_\Pi \}$

%$$S = \{ L | L \in S', L=Res(L_1,L_2,v)\text{ where }L_1,L_2 \notin S'\text{ and }v \in \{ A \cup B \} \}$$
\noindent Let $F = \underset{(C,H^z) \in S}{\bigwedge}C$. Note that $F$ is a false CNF formula because there exists a sub-derivation $\widehat{\Pi} = \{C |\exists L = (C,H^z) \in S' \}$ which derives a $\bot$ given $F$. The variables in $F$ are only $x_{ij}$'s where $i,j \in [n]$, therefore it consists of $n^2$ variables. In Lemma \ref{lemma_6} we prove that each clause in $F$ has width $\geq n-1$. That is each clause can be falsified by setting atleast n-1 variables to 0. Hence the number of complete assignments of $X$ that can falsify a clause $\in F$ will be at most $2^{ n^2 -(n-1)}$. Since $F$ is a false CNF formula, all assignments to $X$ should falsify some clause of $F$. Therefore, the number of clauses in $F$ should be $\geq 2^{n-1}$. This implies that the number of lines in $S$ is at least $2^{n-1}$. Therefore, the number of lines in $\Pi$ must also be exponential in $n$.
%That implies $2^{n-1}$ corresponding lines $\in S$ as $F$ is the conjunction of clauses in $S$. This directly lifts to these lines also $\in \Pi$. That is, $|\Pi| = 2^{\Omega(n)}$. 
\qed
\end{proof}

Now it remains to prove Lemma~\ref{lemma_6} which we prove below.
\begin{lemma}[\cite{BeyersdorffBMPS20}]\label{lemma_6}
For all $L = (C,H^z) \in S$, width($C$) $\geq n-1$.
\end{lemma}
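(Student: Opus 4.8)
The plan is to follow the argument of \cite[Theorem~9]{BeyersdorffBMPS20} essentially line by line, pinpointing the single step where their appeal to isomorphism must be replaced by a consistency argument (this is exactly what Claim~\ref{claim_7} supplies). Fix $L=(C,H^z)\in S$. By the definition of $S$ we may write $L=Res(L_1,L_2,v)$ with $L_1,L_2\notin S'$, and we already observed $v\in A\cup B$; since \crn is symmetric under swapping the $A$- and $B$-blocks, after relabelling I assume $v=a_n$, with $a_n\in C_1$ and $\overline{a_n}\in C_2$. The first thing I would record is a \textbf{unique-literal} observation: since $L\in S'$ we have $vars(C)\cap(A\cup B)=\emptyset$, whereas $L_2\notin S'$ must be witnessed by the clause condition, because the path $L_2\to L\to\cdots\to\bot$ lies entirely in $S'$ except for $L_2$ itself; hence $vars(C_2)\cap(A\cup B)\neq\emptyset$. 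As $C=(C_1\setminus\{a_n\})\cup(C_2\setminus\{\overline{a_n}\})$, any $A\cup B$-variable of $C_2$ other than $a_n$ would survive into $C$, so $a_n$ is the \emph{only} $A\cup B$-variable of $C_2$.

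The backbone is then \textbf{variable persistence}: resolving on the pivot $a_n$ deletes only the literals on $a_n$, so every $X$-variable of $C_2$ still occurs in $C$, giving $vars(C)\supseteq vars(C_2)\setminus\{a_n\}$. It therefore suffices to exhibit $n-1$ distinct $X$-variables in $C_2$, which is the content of Claim~\ref{claim_8}. To locate them I would trace the sub-derivation $\Pi_{L_2}$: the literal $\overline{a_n}$ occurs in no axiom other than $L_A$, so $L_A$ is an ancestor of $L_2$; and since $\overline{a_n}$ is the only surviving $A\cup B$-literal of $C_2$, each of $\overline{a_1},\dots,\overline{a_{n-1}}$ present in $L_A$ must be eliminated, necessarily by a resolution on the corresponding pivot $a_i$. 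The clause providing $a_i$ descends from an axiom $A_{ij}$ whose existential subclause is $x_{ij}\lor a_i$, so every such elimination imports a row-$i$ variable $x_{ij}$ into the derivation.

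The crux, and the only place where \mergeresR genuinely departs from \mergeres, is to show that these imported $x_{ij}$ with $i\in[n-1]$ are never resolved away and hence persist in $C_2$. This is Claim~\ref{claim_7}, phrased as $x_{ij}\notin var(H^z_2)$ for $i\in[n-1]$, $j\in[n]$. Two facts drive it. First, a special feature of \crn: the axioms $A_{ij}$ and $B_{ij}$ carry the opposite constant $z$-strategies $0$ and $1$, so a resolution on any $x_{ij}$ (an if-else step, since $x_{ij}<_Q z$) produces a \emph{genuine} branch on $x_{ij}$ and cannot be the harmless dilution seen in Example~\ref{eg_completeness}; consequently proving $x_{ij}\notin var(H^z_2)$ is the same as proving that $x_{ij}$ is never a pivot in $\Pi_{L_2}$, which is what yields persistence. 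Second, to resolve some $x_{ij}$ away one would have to import $\overline{x_{ij}}$, available only from $B_{ij}$, hence import $b_j$, whose later removal forces $L_B$ and a fresh cascade of $A\cup B$-literals into $\Pi_{L_2}$. I would argue, exactly as in \cite{BeyersdorffBMPS20} but now using the consistency required by the union ($\circ$) steps on these $A\cup B$-pivots in place of isomorphism of merge maps, that under $(A\cup B)$-regularity this cascade cannot be completed without leaving an $A\cup B$-variable other than $a_n$ in $C_2$, contradicting the unique-literal observation. Hence no $x_{ij}$ with $i<n$ is a pivot in $\Pi_{L_2}$, so each of the imported $x_{1,j_1},\dots,x_{n-1,j_{n-1}}$ survives, giving the $n-1$ distinct $X$-variables required.

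Combining persistence with Claim~\ref{claim_8} gives $vars(C)\supseteq\{x_{1,j_1},\dots,x_{n-1,j_{n-1}}\}$ and therefore $\mathrm{width}(C)\geq n-1$, which is the lemma; Theorem~\ref{theorem_lb} then only has to count. I expect essentially all of the difficulty to be concentrated in Claim~\ref{claim_7}: establishing, with consistency rather than isomorphism, that one cannot smuggle in a $B_{ij}$ (and thereby cancel a row-$i$ variable $x_{ij}$ with $i<n$) while still reducing the $A\cup B$-content of $C_2$ down to the single literal $\overline{a_n}$. This is precisely the gap the paper flags between \mergeres and \mergeresR, and the bookkeeping of which axioms can and cannot be ancestors of the two parents (as indicated in Figure~\ref{fig:proof_diag}) is what makes the weaker consistency condition still suffice.
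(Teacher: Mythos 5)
Your scaffolding is sound and partly an improvement: the unique-literal observation (that $\overline{a_n}$ is the only $A\cup B$-literal of $C_2$), the persistence of $X$-variables across the resolution at $L$, and the reduction to ``$C_2$ contains $n-1$ distinct $X$-variables'' are all correct; indeed, reading Claim~\ref{claim_8} literally, either disjunct already supplies $n-1$ distinct $X$-variables in $C_2$, so your counting would close the lemma without the paper's additional argument that $x_{nj}\in C_1\setminus C_2$. However, two steps are genuinely missing or broken. First, Claim~\ref{claim_7} is not a self-contained fact about $L_2$: its proof opens by assuming what the proof of Lemma~\ref{lemma_6} must itself establish, namely that $x_{ij}\notin var(H^z_1)$ for all $i\in[n-1],j\in[n]$ and that $H^z_1$ is non-trivial. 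This is precisely where $(A\cup B)$-regularity enters: since $a_n$ is the pivot at $L$, regularity forbids $a_n$ as a pivot inside $\Pi_{L_1}$; hence $L_A$ cannot be an ancestor of $L_1$ (its literal $\overline{a_n}$ could neither be cancelled nor be allowed to survive into $C$); hence no $A_{ij}$ with $i\in[n-1]$ is used in $\Pi_{L_1}$ (positive $a_i$'s can only be cancelled against $L_A$); hence $x_{ij}\notin var(H^z_1)$. Your proof never analyses $\Pi_{L_1}$ and never uses regularity for this purpose, so your invocation of Claim~\ref{claim_7} is unsupported.

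The second, more serious gap is the asserted equivalence ``$x_{ij}\notin var(H^z_2)$ iff $x_{ij}$ is never a pivot in $\Pi_{L_2}$'', which your persistence argument needs in the direction: semantic independence implies no pivot. That direction is unjustified, and it is exactly the semantic/syntactic distinction this paper is careful about: Example~\ref{eg_completeness} shows that an if-else step can branch on a variable while computing a function that does not depend on it (a ``diluted'' strategy), so $x_{ij}\notin var(H^z_2)$ does not rule out $x_{ij}$-pivots. Your ``genuine branch'' defence appeals to the opposite constants $0$ and $1$ carried by the axioms $A_{ij}$ and $B_{ij}$, but the two parents of an $x_{ij}$-resolution deep inside $\Pi_{L_2}$ carry arbitrary derived strategies, not those constants; and even a genuine dependence created at that step can be washed out before reaching $L_2$ by a later if-else (a strategy depending on $x_{ij}$ only inside the $\overline{y}$-branch disappears after an if-else on $y$ against a constant). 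This is why the paper never attempts to show that $x_{ij}$ is not a pivot in $\Pi_{L_2}$. Instead, Claim~\ref{claim_8} is proved semantically: assuming $C_2$ misses an entire row $u$ and an entire column $v$, one constructs assignments $\alpha_0,\alpha_1$ falsifying $C_2$ that differ only on $x_{uv}$, uses $x_{uv}\notin var(H^z_2)$ to conclude $H^z_2(\alpha_0)=H^z_2(\alpha_1)=d$, and contradicts the induction of Lemma~\ref{lemma_sound} because $(\alpha_{\overline{d}},d)$ falsifies no axiom of \crn. Note that your tracing argument, if valid, would establish that the first disjunct of Claim~\ref{claim_8} always holds, which is strictly stronger than what the paper (or the original \mergeres argument) obtains --- a further warning sign. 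The repair is to keep your setup and counting, add the regularity analysis of $\Pi_{L_1}$ to license Claim~\ref{claim_7}, and then invoke Claim~\ref{claim_8} as stated rather than the syntactic persistence-of-imported-literals route.
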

\begin{proof}
Observe that $L$ is not an axiom as all axioms of \crn have a variable from $A \cup B$ and so they cannot belong to $S$. So, let $L = res(L_1,L_2, v)$ where $L_1,L_2 \notin S'$. Since two lines not belonging in $S'$ resolve to make the resultant $\in S'$, the pivot (i.e $v$) should be from $A \cup B$. Assume $v \in A$, a similar argument can be made when $v \in B$. Without loss of generality, assume that $v = a_n$\footnote{Note that here $a_n$ is used only for ease in dividing the set $A$ into partitions. nowhere in the proof we use the fact that $a_n$ is the last variable in $A$. Hence it is indeed w.l.o.g}; and $a_n \in C_1$ and $\overline{a_n} \in C_2$.

Since $\Pi$ is ($A \cup B$)-regular, $a_n$ does not occur as a pivot in the sub-derivation $\Pi_{L_1}$. It implies that the axiom clause $L_A$ cannot be used in deriving $L_1$, because otherwise $C_1$ will have both $a_n$ \& $\overline{a_n}$ which makes it a tautology. That implies, axioms with other positive literals $a_i$'s cannot be used in $\Pi_{L_1}$ as the negated literals $\overline{a_i}$'s are only available in $L_A$ which in-turn cannot be used in $\Pi_{L_1}$. Positive literals of $a_i$'s only $\in A_{ij}$ for all $j \in [n]$. Hence, axioms $A_{ij}$ for $i \in [n-1] , j \in [n]$ also cannot be used in deriving the line $L_1$. Now, we know $x_{ij}$ only occur in $A_{ij}$; so $H^z_1$ has no $x_{ij}$ variable for $i \in [n-1] , j \in [n]$. Also, $H^z_1$ is not a trivial strategy as some $A_{nj}$ for $j \in [n]$ has been used because $a_n \in C_1$.

Since the pivot $a_n$ at the resolution step obtaining line $L$ is to the right of $z$,by the rules of \mergeresR, $H^z_1$ and $H^z_2$ are consistent.
In Claim \ref{claim_7}, we prove that even though \mergeresR only insists on consistency, it still holds that for each $i \in [n-1]$, and each $j \in [n]$, $x_{ij} \notin$ var($H^z_2$). Using this result we prove in Claim \ref{claim_8} below, that $C_2$ will have at least $n-1$ variables (including $\overline{a_n}$). Therefore, at least $n-2$ variables from $C_2$ belong in $C$.

Also, observe that $x_{nj} \in C_1$ for some $j \in [n]$: Since some clause $A_{nj}$ for $j \in [n]$ was used in $\Pi_{L_1}$, the literal $x_{nj}$ is introduced into the proof and resolution of $x_{nj}$ is not possible before $L_1$. This is because, the clause $B_{nj}$ needed to resolve it, brings with it literal $b_j$ which needs to be resolved before $L_1$ (as $L_1$ cannot have any $A \cup B$ literals other than $a_n$). To resolve this $b_j$, one needs to introduce the clause $L_B$, but $L_B$ brings all $\overline{b}$'s into the resultant which cannot be further resolved as the $B$-clauses needed for the same do not have consistent strategies anymore. That is, because of the use of $A_{nj}$ the resolvent has a $0$ strategy for some assignment to $X$ variables, but $B$-clauses have a constant strategy of $1$ hence these strategies will not be consistent to resolve further.

%Also, since some clause $A_{nj}$ for $j \in [n]$ was used in $\Pi_{L_1}$, the literal $x_{nj}$ is introduced into the proof. This $x_{nj} \in C_1$ as: the clause $B_{nj}$ needed to resolve it, brings with it literal $b_j$ which needs to be resolved before $L_1$ (as $L_1$ cannot have any $A \cup B$ literals other than $a_n$). To resolve this $b_j$, one needs to introduce the clause $L_B$, but $L_B$ brings all $\overline{b}$'s into the resultant which cannot be further resolved as the $B$-clauses needed for the same do not have consistent strategies anymore.

Hence, we know $x_{nj} \in C_1$ for some $j \in [n]$. It implies that $x_{nj} \in C$ as well. This $x_{nj}$ cannot $\in C_2$ as the corresponding axiom clause needed for the same has $a_n$ in it, which would make $C_2$ a tautology.
Using the three results above, we can derive that width($C$) $\geq n-1$.
\qed
\end{proof}

\begin{claim}\label{claim_7}
For $i \in [n-1]$, and each $j \in [n]$, $x_{ij} \notin$ var($H^z_2$).
\end{claim}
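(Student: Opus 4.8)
The plan is to argue by contradiction and, in place of the isomorphism property used for \mergeres in~\cite{BeyersdorffBMPS20}, to exploit the consistency requirement $H^z_1 \simeq H^z_2$ forced by the union ($\circ$) step deriving $L$, combined with a structural property of $H^z_1$. So assume, towards a contradiction, that $x_{ij} \in \mathrm{var}(H^z_2)$ for some $i \in [n-1]$ and $j \in [n]$.

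The central step I would establish first is that $H^z_1$ is \emph{definite everywhere}, i.e.\ $H^z_1(\varepsilon) \neq \ast$ for every $\varepsilon \in \langle L_Q(z)\rangle$. Recall from the proof of Lemma~\ref{lemma_6} that the only axioms available in $\Pi_{L_1}$ are the clauses $A_{nj}$ (whose falsifying $z$-literal is the constant $0$), the clauses $B_{i'j'}$ (constant $1$), and $L_B$ (the constant $\ast$), since $A_{i'j'}$ with $i'<n$ and $L_A$ are barred. Thus $L_B$ is the only source of a trivial strategy. I would then argue that any occurrence of $L_B$ is \emph{masked}: as $C_1$ contains no variable of $A\cup B$ other than $a_n$, every $\overline{b}$-literal introduced by $L_B$ must be resolved away inside $\Pi_{L_1}$, and each such resolution is a union step against some $B$-clause of constant strategy $1$; because $\ast \cup 1 = 1$, the trivial strategy is overwritten by a definite one. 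Hence every strategy contribution in $\Pi_{L_1}$ is definite, and since both $\circ$ and $\bowtie$ preserve definiteness, $H^z_1$ never outputs $\ast$.

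Next I would pin down the behaviour of $H^z_2$ on $x_{ij}$. The literal $x_{ij}$ occurs only in $A_{ij}$ and $\overline{x_{ij}}$ only in $B_{ij}$, whose $z$-strategies are the constants $0$ and $1$; so a $\bowtie$-node on the pivot $x_{ij}$ can be created only by resolving a descendant of $A_{ij}$ against a descendant of $B_{ij}$, and since definite values are never reverted to $\ast$, the two branches carry the persistent values $0$ and $1$. As $x_{ij}$ is genuinely used in $H^z_2$, there is a pair of assignments $\varepsilon_1,\varepsilon_0 \in \langle L_Q(z)\rangle$ agreeing on every variable except $x_{ij}$ with $\{H^z_2(\varepsilon_1),H^z_2(\varepsilon_0)\} = \{0,1\}$. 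Finally, since $A_{ij}$ (with $i<n$) is unavailable in $\Pi_{L_1}$, the variable $x_{ij}$ is not a pivot of $\Pi_{L_1}$, so $x_{ij} \notin \mathrm{var}(H^z_1)$ and therefore $H^z_1(\varepsilon_1)=H^z_1(\varepsilon_0)$. The consistency $H^z_1 \simeq H^z_2$ forces this common value to be consistent with both $0$ and $1$, i.e.\ equal to $\ast$, contradicting that $H^z_1$ is definite everywhere. Hence $x_{ij} \notin \mathrm{var}(H^z_2)$.

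I expect the main obstacle to be the first step, namely proving that $H^z_1$ has no $\ast$-output. In \mergeres this is immediate, because isomorphism equates the variable sets of the two hypothesis strategies; here it must be replaced by the masking analysis, and care is needed to verify that $L_A,L_B$ are indeed the only trivial-strategy clauses and that eliminating their literals necessarily unions against the definite strategy of a $B$-clause (this is exactly where $0 \circ 1$ being undefined, rather than isomorphism, does the work). A secondary delicate point is justifying that the dependence of $H^z_2$ on $x_{ij}$ yields a witness pair realising the two \emph{opposite} definite values $0$ and $1$, rather than a definite value paired with $\ast$; this again relies on the monotonicity fact that definite values originating from $A_{ij}$ and $B_{ij}$ cannot be undone by subsequent $\circ$ or $\bowtie$ operations.
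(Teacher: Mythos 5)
Your proposal follows essentially the same route as the paper's proof: the three load-bearing ingredients --- (i) $L_A$ and $L_B$ are the only sources of trivial strategies in a refutation of the completion formulas, and any union step involving them masks the $\ast$; (ii) the union step deriving $L$ forces $H^z_1 \simeq H^z_2$; (iii) $x_{ij} \notin \mathrm{var}(H^z_1)$ --- are exactly the paper's. Your Steps 1 and 3 are sound. The paper phrases the conclusion directly rather than by contradiction: once \emph{both} $f_1$ and $f_2$ are shown to be $\ast$-free, consistency plus the independence of $f_1$ gives $f_2(\sigma',x_{ij}{=}0)=f_1(\sigma',x_{ij}{=}0)=f_1(\sigma',x_{ij}{=}1)=f_2(\sigma',x_{ij}{=}1)$; that difference is cosmetic.

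The genuine gap is in your Step 2. The principle you invoke --- that definite values ``are never reverted to $\ast$'' by later $\circ$ or $\bowtie$ operations --- is false for $\bowtie$: the value of $(H \overset{y}{\bowtie} H')(\varepsilon)$ is that of whichever branch $\varepsilon(y)$ selects, so a definite value in the unselected branch is simply discarded (take $H \equiv 0$, $H' \equiv \ast$; the result is $\ast$ on every $\varepsilon$ with $\varepsilon(y)=0$). Likewise ``the two branches carry the persistent values $0$ and $1$'' is false as stated: a line whose clause still contains $x_{ij}$ can already have a mixed strategy before the $\bowtie$ on $x_{ij}$ occurs --- e.g.\ resolve $A_{ij}$ with $L_A$ on $a_i$ (union, strategy stays $0$), then with $A_{kl}$ on $a_k$ (union, still $0$), then with $B_{kl}$ on the pivot $x_{kl}$ (if-else): the resulting clause contains $x_{ij}$ but its strategy outputs $1$ when $x_{kl}=1$ and $0$ otherwise. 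What your contradiction actually requires is only the weaker fact that $H^z_2$ never outputs $\ast$, so that any witness pair for its dependence on $x_{ij}$ realizes the two opposite definite values; and this follows by running your own Step-1 masking induction over $\Pi_{L_2}$ as well (there $L_A$ --- which, unlike in $\Pi_{L_1}$, is available in $\Pi_{L_2}$ --- and $L_B$ are the only constant-$\ast$ lines; they can never be united with each other since they share no variable, and neither can serve as a $\bowtie$-hypothesis since their clauses contain no $X$-variable). This is precisely how the paper closes the argument, by proving that neither $f_1$ nor $f_2$ can ever output $\ast$; with Step 2's justification replaced by that induction, your proof is correct.
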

\begin{proof}
At the point of use of this claim in the proof of Lemma \ref{lemma_6}, we definitely know that for $i \in [n-1]$ \& $j \in [n]$; $x_{ij} \notin H^z_1$. That is, if $f_1$ is the function representing the strategy $H^z_1$, 
%for such $i,j$'s, we know that $x_{ij} \notin H^z_1$, i.e 
then for any assignment $\sigma$ of $x_{nj}$'s and $i \in [n-1], j \in [n]$, it implies that: 
\begin{align}
f_1(\sigma,x_{ij}=0) = f_1(\sigma,x_{ij}=1)
\end{align}

Let $f_2$ be the function representing the strategy $H^z_2$. Since $a_n$ is to the right of $z$, we know that $H^z_1$ and $H^z_2$ are consistent, i.e for any assignment $\sigma'$ (an extension of $\sigma$) and for $i \in [n-1], j \in [n]$, it implies that:
\begin{align}
f_2(\sigma',x_{ij}=0) \simeq f_1(\sigma',x_{ij}=0) \label{eq:2}\\
f_2(\sigma',x_{ij}=1) \simeq f_1(\sigma',x_{ij}=1) \label{eq:3}
\end{align}
Only remaining question is if $f_2(\sigma',x_{ij}=0) = f_2(\sigma',x_{ij}=1)$? Observe that if this equality holds, then $f_2$ will be independent of $x_{ij}$'s, which implies that $x_{ij} \notin H^z_2$ for $i \in [n-1],j \in [n]$. Now, we are heading towards proving the equality holds. \\
Note that if none of the terms in equation \ref{eq:2} and equation \ref{eq:3} give a `$\ast$' for any assignment of $X$, the equality in question definitely holds. So, now we prove that none of them can give a `$\ast$' for any given assignment.

The only axiom clauses of \crn with trivial strategies are $L_A, L_B$ and these axioms only contain variables of $A \cup B$, which are all to the right of $z$. Hence if any other clause is to be resolved with these clauses, the pivot has to be in $A \cup B$ i.e. a union step needs to be performed. At this point the trivial-ness of $L_A$ (or $L_B$) is masked and does not show up in the final strategy of the resultant line; this is because union of any strategy with a trivial strategy will be the strategy itself.
The only case by which a `$\ast$' can be in the resulting strategy is if $L_A$ is resolved with $L_B$, which can clearly not happen as they have no common variable. 

Since $C_1,C_2$ are definitely not the axiom clauses $L_A$ (or $L_B$), using the above argument it is simply not possible for the functions $f_1$ (or $f_2$) to output a `$\ast$' for any input assignment provided. 
This means the equality in question above holds; meaning that $H^z_2$ also doesn't depend on $x_{ij}$'s when $i \in [n-1], j\in [n]$ i.e $x_{ij} \notin vars(H^z_2)$. \qed
\end{proof}
Now we prove claim \ref{claim_8} which was used in Lemma \ref{lemma_6}.
\begin{claim}[\cite{BeyersdorffBMPS20}]\label{claim_8}
Either for all $i \in [n-1]$, $C_2$ has a variable of the form $x_{i*}$, or for all $j \in [n]$, $C_2$ has a variable of the form $x_{*j}$
\end{claim}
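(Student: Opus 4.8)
The plan is to pin down the $A\cup B$-content of $C_2$ and then track which $x$-variables are forced to remain in it. First I would observe that, since $L=res(L_1,L_2,a_n)$ lies in $S$, the resolvent $C$ contains no $A\cup B$ variable; as the resolution step only removes the pivot pair $a_n,\overline{a_n}$, this forces $\overline{a_n}$ to be the \emph{unique} $A\cup B$ literal of $C_2$ (and symmetrically $a_n$ the unique one of $C_1$). Because $\overline{a_n}$ occurs only in the axiom $L_A=\overline{a_1}\vee\dots\vee\overline{a_n}$, the clause $L_A$ must be downloaded somewhere in the subderivation $\Pi_{L_2}$. Consequently every other literal of $L_A$, namely $\overline{a_i}$ for $i\in[n-1]$, must be resolved away before reaching $L_2$, since none of them may appear in $C_2$.

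The core of the argument is to show that clearing these $\overline{a_i}$'s injects $x$-variables that can never be removed. For each $i\in[n-1]$, the literal $\overline{a_i}$ can only be cancelled by resolving on $a_i$ against an axiom $A_{ij_i}=x_{ij_i}\vee z\vee a_i$ (the unique clauses containing $a_i$), which introduces $x_{ij_i}$ into the derived clause. I would then argue, invoking Claim~\ref{claim_7}, that this $x_{ij_i}$ is never used as a pivot in $\Pi_{L_2}$: the only clause supplying $\overline{x_{ij_i}}$ is $B_{ij_i}$, whose falsifying $z$-value is the constant $1$, whereas the $A_{ij_i}$-side carries the definite $z$-value $0$ on the relevant $X$-assignments; hence a resolution on $x_{ij_i}$ (an if-else step, since $x_{ij_i}<_Q z$) would make $H^z_2$ genuinely depend on $x_{ij_i}$, contradicting $x_{ij_i}\notin vars(H^z_2)$. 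As $x_{ij_i}$ is never resolved it survives into $C_2$, so $C_2$ contains a variable of the form $x_{i*}$ for every $i\in[n-1]$, establishing the first disjunct. Following the organisation of~\cite{BeyersdorffBMPS20}, I would also record the mirror branch: if instead $L_B$ is downloaded in $\Pi_{L_2}$, then all $\overline{b_j}$ ($j\in[n]$) are introduced and must be cleared via $B_{i_jj}$, injecting $\overline{x_{i_jj}}$ that survive by the symmetric argument, yielding the second disjunct.

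The main obstacle is exactly the place where consistency replaces isomorphism. In \mergeres one can read the unresolved variables off directly from isomorphism of the merge maps, but here the $A\cup B$-pivot steps are union ($\circ$) steps that demand only consistency, so ``$x_{ij_i}$ stays'' is not a purely syntactic fact. The delicate point is to justify that resolving on $x_{ij_i}$ necessarily produces a genuine branch in $H^z_2$: I must rule out the ``$\ast$-dilution'' cases by using that a union operation never erases an already-definite $z$-value and that the $A$- and $B$-axioms contribute the opposing definite values $0$ and $1$, so the $A$-side cannot be uniformly $\ast$ where it matters. This is precisely the control that Claim~\ref{claim_7} packages, and aligning its strategy-level conclusion with the clause-level bookkeeping of Lemma~\ref{lemma_6} — rather than any heavy computation — is where the real work lies.
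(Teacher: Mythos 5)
Your proposal takes a fundamentally different route from the paper, and it has a genuine gap at its load-bearing step. The paper's proof of Claim~\ref{claim_8} is semantic and proceeds by contradiction: assuming some row $u\in[n-1]$ and some column $v\in[n]$ are both absent from $C_2$, it builds two explicit assignments $\alpha_0,\alpha_1$ (extending the minimal falsifying assignment of $C_2$ by setting $a_u=b_v=0$, all other unset $A\cup B$ variables to $1$, $x_{u*}=1$ and $x_{*v}=0$ except $x_{uv}$, and finally $x_{uv}=0$ resp.\ $1$), observes that these falsify $C_2$ and satisfy every axiom except possibly $A_{uv}$, $B_{uv}$, uses Claim~\ref{claim_7} to get a common value $d=H^z_2(\alpha_0)=H^z_2(\alpha_1)$, and then contradicts the soundness invariant established in Lemma~\ref{lemma_sound} (every line's strategy, joined with an assignment falsifying its clause, must falsify some axiom), since $(\alpha_{\overline{d}},z=d)$ satisfies all axioms of \crn. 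Your argument never invokes this soundness invariant at all; instead it is a syntactic axiom-tracking argument whose crux is the claim that no resolution on any $x_{ij}$ with $i\in[n-1]$ can occur anywhere in $\Pi_{L_2}$, because such a step ``would make $H^z_2$ genuinely depend on $x_{ij}$, contradicting Claim~\ref{claim_7}.''

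That crux is exactly what you have not proved, and it does not follow from what you sketch. Two separate things go wrong. First, at the moment of a resolution on $x_{ij}$, the two premises are in general \emph{derived} clauses, not the axioms $A_{ij}$ and $B_{ij}$; their strategies are functions assembled from many axiom constants via $\circ$ and $\bowtie$, so ``the $A$-side carries $0$ and the $B$-side carries $1$'' is not automatic — you would need an invariant over all strategies derivable in $\Pi_{L_2}$ guaranteeing that the two branches disagree at a common assignment, and you neither state nor prove one. Second, even granting that the if-else node at that line creates genuine functional dependence on $x_{ij}$, that dependence must still survive every subsequent $\circ$ and $\bowtie$ step up to $L_2$: an if-else on a pivot $y$ discards each branch on half the space and can wash out a dependence living only on the discarded half. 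Excluding this requires controlling repeated $x$-pivots along paths, which is unavailable in this setting — Claim~\ref{claim_8} feeds into Lemma~\ref{lemma_6} and Theorem~\ref{theorem_lb}, which are proved for $(A\cup B)$-regular refutations, where resolutions on $x$-variables may repeat freely along a path. Finally, a structural symptom that something is off: since $\overline{a_n}\in C_2$ already forces $L_A$ to be downloaded in $\Pi_{L_2}$, your argument (if it worked) would establish the first disjunct unconditionally, rendering the ``either/or'' of the statement — and your ``mirror branch, if instead $L_B$ is downloaded'' — vacuous; the disjunctive form of the claim is precisely the signature of the weaker, semantic fact that the paper's contradiction argument actually delivers.
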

\begin{proof}
At this point in the proof of Lemma~\ref{lemma_6}, we definitely know that $\overline{a_n} \in C_2$, and for all $i \in [n-1]$, for all $j \in [n]$, $x_{ij} \notin$ var($H^z_2$). We prove this claim by contradiction. Suppose the claim is wrong i.e, there exists some $u \in [n-1]$ where for all $l \in [n]$ $x_{ul} \notin var(C_2)$ and some $v \in [n]$ where for all $k \in [n]$ $x_{kv} \notin var(C_2)$.

Let $\rho$ be the minimum partial assignment falsifying $C_2$. Then we know that :
\begin{itemize}
\item [$\triangleright$] $\rho$ sets $a_n = 1$, leaves all other variables in $A \cup B$ unset, since they $\notin C_2$.
\item [$\triangleright$] $\rho$ does not set any $x_{ul}$ or $x_{kv}$, since by our assumptions they all are not in $C_2$.
\end{itemize}
Now, extend $\rho$ to assignment $\alpha$ by setting:
\begin{itemize}
\item [$\triangleright$] $a_u=b_v=0$ and rest all unset variables from $A \cup B$ to $1$. 
\item [$\triangleright$] Also except $x_{uv}$, set $x_{u*}=1$ and $x_{*v}=0$.
\end{itemize}
Observe that the assignment $\alpha$ satisfies all axiom clauses except $A_{uv}$ and $B_{uv}$ and does not falsify any axiom.\\
Now extend $\alpha$ to $\alpha_0$ and $\alpha_1$ by setting $x_{uv}=0$ and $1$ respectively.
 
The extension $\alpha_0$ satisfies one more axiom i.e. $B_{uv}$; similarly $\alpha_1$ satisfies one more axiom i.e. $A_{uv}$. Note that they still do not falsify the remaining axiom. That is, $\alpha_0$ does not falsify $A_{uv}$ and similarly, $\alpha_1$ does not falsify $B_{uv}$.

$\alpha_0$ and $\alpha_1$ agree everywhere except on $x_{ij}$ , and since $x_{ij} \notin$ var($H^z_2$), it follows that $H^z_2(\alpha_0) = H^z_2(\alpha_1)$, say this value is equal $d$.

From the proved Induction in Lemma~\ref{lemma_sound}, the partial strategy of universal player at every line combined with the extension of the existential assignment falsifying it's clause part, should falsify some axiom of the QBF. Also, $\alpha_0$ and $\alpha_1$ falsify $C_2$, since they extend $\rho$. Hence, it is a contradiction that ($\alpha_{\overline{d}},d$) satisfies all axioms. Therefore, the claim needs to be true.

%A_uv = (x_uv \vee z \vee a_u)
%B_uv = (\neg x_uv \vee \neg z \vee b_v)

%A_uv = (x_uv \vee z )
%B_uv = (\neg x_uv \vee \neg z)

%alpha_1
%B_uv = (\neg z)

%alpha_0
%A_uv = (z)

%H_2(aplha_0)=1
% (aplha_0+1)=falsify
%(alpha_1 +1) =falsify

\qed 
\end{proof}
From the above discussions and due to Theorem~\ref{theorem_lb}, we have the following:
\begin{theorem}\label{thm:reg-meresr-lower}
Every \mergeresR-regular refutation of \crn has size $2^{\Omega(n)}$.
\end{theorem}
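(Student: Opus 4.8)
The plan is to obtain this statement as an immediate corollary of the stronger Theorem~\ref{theorem_lb}. Recall from Section~\ref{sec:def-mergeR} that, for a subset $S$ of the existential variables $X$ of a false QBF, a $\mathcal{P}$-refutation $\pi$ is called $S$-regular if no leaf-to-root path in $G_\pi$ uses any $x \in S$ as a pivot more than once, and that a \emph{regular} proof is by definition exactly an $X$-regular proof. So the only thing to verify is that full regularity is a stronger hypothesis than the $(A \cup B)$-regularity assumed in Theorem~\ref{theorem_lb}.

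First I would observe that the sets $A = \{a_1,\dots,a_n\}$ and $B = \{b_1,\dots,b_n\}$ appearing in the definition of \crn consist of existential variables of \crn, so that $A \cup B \subseteq X$. Consequently $(A \cup B)$-regularity is implied by full regularity: if a refutation $\Pi$ uses no existential variable as a pivot more than once along any leaf-to-root path of $G_\Pi$, then in particular it uses no variable of the subset $A \cup B$ as a pivot more than once along any such path. Hence every regular ($=X$-regular) refutation of \crn in any proof system $\mathcal{P} \in$ \mergeresR is automatically an $(A \cup B)$-regular refutation of \crn.

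The remaining step is simply to invoke Theorem~\ref{theorem_lb}, which already proves the bound $2^{\Omega(n)}$ for the larger class of all $(A \cup B)$-regular refutations. Since every regular refutation lies in this class, the same exponential lower bound applies verbatim, which completes the proof. I do not expect any genuine obstacle in this step: the entire substance of the lower bound is carried by Theorem~\ref{theorem_lb} and its supporting Lemma~\ref{lemma_6} together with Claims~\ref{claim_7} and~\ref{claim_8} (in particular the consistency argument of Claim~\ref{claim_7} that replaces the isomorphism reasoning of \cite{BeyersdorffBMPS20}). The present theorem is merely the specialization of that stronger result obtained by strengthening the regularity hypothesis from the subset $A \cup B$ to the full variable set $X$, and so requires no new argument beyond the containment $A \cup B \subseteq X$.
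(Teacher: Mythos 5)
Your proposal is correct and matches the paper's own route exactly: the paper likewise obtains Theorem~\ref{thm:reg-meresr-lower} as an immediate corollary of the stronger Theorem~\ref{theorem_lb}, since $A \cup B \subseteq X$ means every regular ($X$-regular) refutation is in particular $(A \cup B)$-regular. No further argument is needed, and your identification of where the real work lies (Theorem~\ref{theorem_lb}, Lemma~\ref{lemma_6}, Claims~\ref{claim_7} and~\ref{claim_8}) is accurate.
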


\section{Conclusion and Future work}\label{sec:conclusion}
\mergeres proof system introduced recently in \cite{mres_paper} builds strategies into proofs for false QBFs. We extend this proof system to a new family of sound, refutationally complete but not polynomial time verifiable proof systems \mergeresR.
For each complete representation $R$, we have a proof system in \mergeresR. %These proof systems are not polynomial time verifiable on account of allowing various representations.
We also define a complete representation $T$, and it's proof system \mergeresT belonging to \mergeresR. 
We show how this \mergeresT proof system efficiently simulates the before-mentioned \mergeres proof system. We also prove that \qef can simulate every valid refutation from proof systems belonging to \mergeresR. Further, we establish a lower bound of Completion Formulas (\crn) for every regular-proof system in \mergeresR. Refer Fig~\ref{fig:systems_diag} for the resulting landscape of QBF-proof systems with efficient simulations.

Still several open problems remain in the scope of this paper. %and in the field of QBF proof complexity in general. 
We would like to end our discussions by pointing out a few of them.

The simulation relation between proof systems in \mergeresR and \mergeres is still open. Since proof systems in \mergeresR uses strong consistency checking rules as compared to the isomorphism rule in \mergeres, we believe that there exists a family of QBFs which are easy for proof systems in \mergeresR but hard for \mergeres. For the motivation of the same refer Example~\ref{eg:motivation}. It presents the resolution steps forbidden in \mergeres but allowed in \mergeresR. 

Another important open problem, is to establish a lower bound for proof systems in \mergeresR. Note that whether KBKF-lq formulas from~\cite{BalabanovWJ14}, is hard or easy for proof systems in \mergeresR is still open. These formulas have been shown to be hard for the \mergeres proof system in \cite{BeyersdorffBMPS20}. 

\mergeres proof system is inspired from the \lqrc proof system. It allows some forbidden resolution steps of \lqrc. It has already been shown that \mergeres efficiently simulates the reduction-less \lqrc proof system~\cite{mres_paper}. However, it is still open whether \mergeres and \lqrc are incomparable, or if one can simulate the other. 

\section{Acknowledgements}
We would like to thank Gaurav Sood for many important discussions, comments and suggestions regarding this paper.

\bibliographystyle{plain}
\bibliography{my-ref}

\end{document}